\let\Re\relax
\DeclareMathOperator{\Re}{Re}
\let\Im\relax
\DeclareMathOperator{\Im}{Im}
\DeclareMathOperator{\SL}{SL}
\newtheorem{thm}{Theorem}[section]
\newtheorem{prop}[thm]{Proposition}
\newtheorem{lem}[thm]{Lemma}
\title{\boldmath On the classification of duality defects in 
$c=2$ compact boson CFTs with a discrete group
orbifold}
\author{Yuma Furuta}
\affiliation{Research Institute for Mathematical Sciences, Kyoto University,\\
Kyoto, Japan}
\emailAdd{yfuruta@kurims.kyoto-u.ac.jp}
\abstract{We propose a novel approach to exploring duality defects in the $c=2$ compact boson conformal field theory (CFT). This study is motivated by the desire to classify categorical symmetries, particularly duality defects, in CFTs. While the $c=1$ case has been extensively studied, and the types of realizable duality defects are largely understood, the situation becomes significantly more complex for $c=2$. The simplicity of the $c=1$ case arises from the fact that its theory is essentially determined by the radius of compactification. In contrast, the $c=2$ case involves more parameters, leading to a more intricate action of T-duality.
As a result, directly solving the condition for a theory to be self-dual under orbifolding becomes highly challenging. To address this, we categorize duality defects into four types and demonstrate that the condition for a toroidal branch theory to be self-dual under an orbifold induced by an automorphism generated by shift symmetry can be reformulated as quadratic equations. We also found that for ``almost all" theories we can enumerate all solutions for such equations. Moreover,
this reformulation enables the simultaneous exploration of multiple duality defects and provides evidence for the existence of duality defects under specific parameter families for the theory, such as $(\tau, \rho) = (it, \frac{1}{2}+it)$ where $t \in \mathbb{Q}$.}
\begin{document}
\maketitle
\flushbottom

\section{Introduction}
\label{sec:intro}
Symmetry is a highly useful concept for studying physics, including quantum field theory, and has been extensively explored by many physicists and mathematicians. For instance, symmetry has been employed to analyze phase transitions triggered by symmetry breaking and to classify phases, as well as to categorize non-perturbative effects (primarily instantons). Over the past few decades, the concept has been significantly broadened in scope \cite{gaiotto_generalized_2015}.
In this context, our focus is on non-invertible symmetry, a type of symmetry that does not necessarily have inverses, unlike conventional group-theoretic symmetries. Such symmetries have been discussed across various dimensions in recent years, with one notable example being the order-disorder operator in the two-dimensional conformal field theory of the Ising model. At first glance, this appears to correspond to a $\mathbb{Z}_2$ symmetry that interchanges the ordered and disordered phases. However, this transformation lacks a true inverse.

In fact, while the ordered phase is characterized by the alignment of spins in a single direction, this information is lost when mapped to the disordered phase. Consequently, applying this operator again does not return the system to its original state. Instead, it produces a superposition of states where the spins are aligned either in the $+\frac{1}{2}$ or $-\frac{1}{2}$ direction. A concrete computation reveals that if we denote this order-disorder operator by $\sigma$, then
\begin{equation}\label{eq:1.1}
    \sigma\times\sigma=1+\epsilon
\end{equation}
where $\epsilon$ is an operator that flips the spin directions.
This type of algebraic structure, known as a fusion algebra, has been known in the context of CFTs for a long time. However, it has gained significant attention in recent decades, being applied and studied in various quantum field theories.

Moreover, such generalized symmetries have increasingly been interpreted as topological defects realized in spacetime. In the context of traditional global symmetries, symmetry has been formalized as an operator acting on the Hilbert space that commutes with the Hamiltonian. According to Noether's theorem, the conserved charges corresponding to such symmetries are defined on ``surfaces" that are one dimension lower than the spacetime. Reinterpreting this, one can regard the surface-associated charged operator as acting on a local operator. 
In this sense, symmetry can be defined as specifying how a surface (or defect) acts on a local operator.  
Furthermore, due to their commutativity with the Hamiltonian, such defects can be continuously deformed freely, as long as they do not cross insertion points or singularities. In physical terms, this means they are topological. As a result, the study of global symmetries has transitioned into the study of topological defects. Additionally, fusion rules such as Equation \eqref{eq:1.1} are defined in terms of the limiting behavior when two defects are brought close together and allowed to collide.

Among the topological defects in quantum field theory, we focus on the defects that appear in conformal field theory (CFT). CFTs possess the special symmetry of conformal invariance, and their scale invariance makes them highly useful for studying gapless theories that are fixed points of renormalization group flows. In particular, two-dimensional CFTs have the infinite-dimensional Lie algebra known as the Virasoro algebra as their symmetry, which has deep mathematical properties and a close relationship with physics through vertex operator algebras \cite{frenkel_vertex_1988}.  

Notably, the topological defects of CFTs have been extensively studied since the works of \cite{verlinde_fusion_1988} and \cite{fuchs_boundaries_2003,fuchs_tft_2002,fuchs_tft_2004,fuchs_tft_2004-2,fuchs_tft_2005,fjelstad_tft_nodate}. For example, in \cite{frohlich_defect_2010,frohlich_duality_2007,fischer_classification_2013}, algebraic structures (particularly Frobenius algebras) formed by CFT topological defects are used to construct correlation functions and to attempt a reformulation of rational CFTs. In such cases, the algebra formed by the defects encodes how the left and right chiral algebras are glued together to construct the full CFT. This connects to earlier work by Kreuzer and Schellekens \cite{kreuzer_simple_1994}, who mathematically studied simple current orbifolds. 
Consequently, questions regarding the algebraic information of a given CFT—such as what topological defects exist and what fusion rules they satisfy—are not only of mathematical interest, involving the classification of module categories, but also hold significant physical meaning.

As a simple example, let us consider a massless compact boson with a central charge of 1. Such CFTs have been completely classified by Ginsparg \cite{ginsparg_curiosities_1988} and consist of two types of branches and three exceptional isolated points. These CFTs include several physically significant cases, such as the SU(2)$_1$ WZW model and the tensor product of two Ising models. Exploring the generalized symmetries of these CFTs is thought to aid in understanding the RG behavior of various QFTs. Indeed, as mentioned earlier, these symmetries are topological and hence invariant under RG flows. For example, studies by \cite{thorngren_fusion_2024-2,thorngren_fusion_2024-3} have explored non-invertible symmetries in such CFTs and even computed the RG flow induced by certain deformations. These computations are significant not only for particle physics but also for theories of topological materials. However, as the central charge exceeds 2, the problem becomes significantly more complex and challenging to solve. 
In fact, as explained in Section \ref{sec:2}, this problem involves solving the condition under which a theory is self-dual under a specific orbifold construction. However, as described at the beginning of Section \ref{sec:3.1}, the degrees of freedom grow quadratically with the central charge when it is 2 or greater, making the problem considerably harder to address. Nevertheless, simplifications are possible. For instance, in \cite{nagoya_non-invertible_2023}, computations are restricted to cases where the B-field is zero, while in \cite{damia_exploring_2024}, duality defects are determined only at specific points, such as bicritical and quadricritical points.

This paper aims to generalize and resolve such scenarios for the toroidal branch of theories with $c=2$. More specifically, we propose a set of quadratic equations that detect duality defects based on the condition of self-duality under orbifolding by discrete subgroups of a certain U(1) symmetry. That is, we claim that each integer solution of these quadratic equations corresponds to a duality defect. This allows us to reduce the problem, which involves many degrees of freedom and matrix equations, to a simpler problem involving equations with two variables, reframing it as a question of finding integer solutions.  

Moreover, we achieve a significant result: in most cases, all integer solutions to the quadratic equations can be enumerated. The details of this result are given in Proposition \ref{prop:3.2}, which implies that almost all duality defects under consideration can be exhaustively listed. While not explicitly performed in this work, fusion rules can also be computed using the resulting integer solutions and the corresponding $\SL(2,\mathbb{Z})$ matrices, as shown, for example, in \cite{damia_exploring_2024}.

The structure of this paper is as follows. In Section \ref{sec:2}, we briefly review the self-duality of $c=1$ compact boson CFTs under shift symmetry orbifolds, providing a few simple examples to observe some duality defects. In Section \ref{sec:3}, we extend these results to $c=2$. Section \ref{sec:3.1} formulates the problem to be solved, while Section \ref{sec:3.2} shows that finding duality defects is equivalent to deriving integer solutions to a certain set of quadratic equations, simplifying the problem into a mathematical one. Section \ref{sec:3.3} demonstrates that, in most cases, these solutions can be enumerated, showing that for almost all points in the fundamental domain of the complex and Kähler moduli of the torus, there are precisely two solutions. We verify through simple examples that these are indeed all solutions.  
In Section \ref{sec:4}, we apply these results to demonstrate the existence of duality defects at a broader range of points. These points belong to a set called multicritical points, which are intersections of multiple conformal manifolds. Identifying exactly marginal deformations from such points could also hint at the existence of non-invertible symmetries in the branches along those directions. However, such calculations are left for future work.

\section{The Orbifolds of the Compact Boson CFTs}\label{sec:2}
In this chapter, we will first translate our logical flow into a simpler setup. Specifically, we will discuss what kind of duality defect appears under the orbifold of a discrete group in the case where the central charge is 1 instead of 2, such that it becomes self-dual. This discussion has already been considered by many researchers, and a more detailed understanding can be obtained by reading, for example, \cite{ginsparg_applied_1991}.

\subsection{Review of $c=1$ compact boson CFT}\label{sec:2.1}
First, we will begin with a brief review of the $c=1$ CFT. This is a bosonic theory with a target space of $S^1$ with radius $R$
\footnote{Specifically, there are two main types of such CFTs. The first is a boson compactified on $S^1$, and the second is the orbifold of these by a charge conjugation symmetry. In this section, we refer to the circle branch, i.e., the theory compactified on $S^1$, as the $c=1$ theory. Of course, the same discussion can be applied to the orbifold branch as well.}, described by the following action.
\begin{equation}
    S=\frac{R^2}{4\pi}\int \dd z\dd \bar{z} \partial X(z,\bar{z})\bar{\partial}X(z,\bar{z}).
\end{equation}
This theory has conformal symmetry and also possesses a global symmetry $U(1) \times U(1)$. This holds for compact boson CFTs with general central charge $c$, so when we refer to a primary field below, it will be with respect to this extended $U(1)^c \times U(1)^c$ symmetry. It is known that there is a correspondence between operators and states in CFTs, and instead of determining the Hilbert space of the theory, it is sufficient to list the primary vertex operators that appear in the theory\cite{kac_vertex_1998}. In the case of $c=1$, any vertex primary field has a discretized spectrum.
\begin{equation}
    (p_L,p_R)=\left( \frac{n}{R}+wR,\frac{n}{R}-wR \right),\quad
    n,w\in\mathbb{Z}.
\end{equation}
That is, any $V$ is labeled by two integers. For the purpose of the following discussion, we will rewrite the labels of the chiral and anti-chiral momentum parts, $(p_L, p_R)$, and introduce new labels $(\theta, \phi)$:
\begin{equation}
    \begin{split}
        \theta&=\frac{1}{R}(X_L+X_R)\\
        \phi &=R(X_L-X_R).
    \end{split}
\end{equation}
where $X_L$ and $X_R$ are left moving sector of $X$ and  respectively right moving sector.
Expressing it in this form allows us to represent the two $U(1)$ symmetry transformations in this CFT as shifts of the two periodic parameters $(\theta, \phi)$. Therefore, we denote the global symmetry of this CFT as $U(1)_{\theta} \times U(1)_{\phi}$.

Next, let us discuss the orbifold. The discrete groups we consider this time are all finite subgroups $U(1)_{\theta} \times U(1)_{\phi}$ of $G_{N,W} := \mathbb{Z}_N \times \mathbb{Z}_W$. We need to clarify how the generators of this group act. In principle, any choice of the subgroup $G_{N,W}$ can be applied to our scenario, but for simplicity, we will consider the diagonal subgroup, where the generators of $\mathbb{Z}_N$ and $\mathbb{Z}_W$ act only on $U(1)_{\theta}$ and $U(1)_{\varphi}$, respectively. That is, we consider a group action such as
\begin{equation}\label{eq:2.4}
    G_{N,W}\ni (a,b):(\theta,\phi)\mapsto(\theta+\frac{2\pi a}{N},\phi+\frac{2\pi b}{W}).
\end{equation}
Below, we will show that the orbifold by this diagonal subgroup $G_{N,W}$ changes the radius of the target space $S^1$ from $R$ to $\frac{W}{N}R$.

Let us first review the operation of orbifolding. This can be applied to any general quantum field theory (QFT), and when a symmetry is acting on the QFT—whether a conventional group-theoretic symmetry or a more general one—a new theory can emerge. Naturally, starting from a conformal field theory (CFT), orbifolding leads to another CFT. A well-known example is the $\mathbb{Z}_2$ orbifold for the $c=1$ case. This is an orbifold under charge conjugation $X \mapsto -X$, which is a global symmetry distinct from the $U(1)$ symmetry present in the theory. Geometrically, this operation corresponds to identifying the upper and lower parts of the circle on which the boson is compactified.
In CFT, such an operation is defined as follows. Start with the partition function of the CFT:
\begin{equation}
    \mathcal{Z}(\tau,\bar{\tau})=\tr _{\mathcal{H}}(q^{L_0-\frac{c}{24}}\bar{q}^{\bar{L}_0-\frac{c}{24}}),\quad q=e^{2\pi i\tau},
\end{equation}
where $\tau$ is the torus parameter of the worldsheet.
Here, a finite group $H$ that acts on the Hilbert space (i.e., a symmetry operator that commutes with the Hamiltonian) is introduced. The partition function of the orbifolded theory, $\mathcal{Z}_{\text{orb}}(\tau, \bar{\tau})$, is then defined as:
\begin{equation}
    \mathcal{Z}_{\mathrm{orb}}(\tau,\bar{\tau})=\sum_{g,h\in H,\ [g,h]=e}\tr _{\mathcal{H}_h}\left(g q^{L_0-\frac{c}{24}}\bar{q}^{\bar{L_0}-\frac{c}{24}}\right)
\end{equation}
where the sum is taken over commuting elements of $H$. The space $\mathcal{H}h$ refers to the subspace of the Hilbert space that is invariant under $h$. In this way, terms like $\text{Tr}_{\mathcal{H}_h}(g q^{L_0 - \frac{c}{24}}\bar{q}^{\bar{L_0}-\frac{c}{24}})$, where $g$ is non-trivial, are referred to as the $g$-twisted sectors. Thus, the process of calculating an orbifold boils down to the following algebraic procedure: First, identify the subspace of the Hilbert space fixed by a certain element $h \in H$. In the context of CFT, since the Hilbert space is classified by the highest weight representations of the Virasoro primary fields, this identification can be carried out by determining the subspace of the momentum lattice. The next step is to construct the sector where an element $g$ commuting with $h$ acts on that subspace.
The question of how the parameters of the orbifolded theory are modified is an interesting one, and for the orbifold considered here, this can be derived through the argument presented below.

Let us return to the discussion of the shift symmetry which consists of the action as in equation \eqref{eq:2.4}.
First, let us identify the invariant sector under the $\theta$ shift of this action. Specifically, we need to determine when $\exp(in\theta) \mapsto \exp(in(\theta+\frac{2\pi}{N}))$ remains invariant, that is, for which values of $n$ this holds. Clearly, this occurs when $n$ is a multiple of $N$.
And then, we need to identify the sector in which certain operators belonging to the Hilbert space are acted upon by an element of $\mathbb{Z}_N$.This sector produces an additional phase $\exp(\frac{2\pi i}{N})$ under the periodic shift $\varphi \mapsto \varphi + 2\pi$. This occurs when $w$ is a rational number, specifically when it can be expressed as $w = \frac{w'}{N}$ using an integer $w'$. Here, if we express the general momentum $(p_L, p_R)$ as
\begin{equation}\label{eq:2.7}
    (p_L,p_R)=\left( \frac{n'N}{R}+\frac{w'}{N}R, \frac{n'N}{R} - \frac{w'}{N}R \right)
\end{equation}
this sector can be considered a momentum space with a radius of $\frac{1}{N}R$. By performing a similar analysis for the $\phi$ shift, we will find that the combination of the invariant sector and twisted sector corresponds to a theory with a radius of $\frac{W}{N}R$.

\subsection{The duality defect}
In this section, we will consider the case where the orbifold described in Section \ref{sec:2.1} becomes self-dual and discuss what kind of duality defect appears. First, we need to explain why a duality defect emerges in the self-dual case. This is based on a method called half-space gauging, which follows the logic described below.

First, suppose there is a CFT $\mathcal{T}$ in spacetime. We divide this spacetime into two halves and apply the orbifold operation by a group $G$ on one half. If the orbifolded theory $\mathcal{T} /G$ is equivalent to the original theory, i.e., $\mathcal{T} \simeq \mathcal{T} /G$, then after dividing the spacetime in half and imposing Dirichlet boundary condition, only the theory $\mathcal{T}$ remains in spacetime with the boundary left in place (See Fig.\ref{fig:1}). If the map that identifies $\mathcal{T}$ with $\mathcal{T} /G$ is not the identity map, the boundary state $\ket{\psi}$ on the left and right sides of the boundary will differ, so when an operator on the left side of the boundary moves to the right, it transforms into another operator. This implies that this boundary has a nontrivial action on local operators, and since the map identifying the theories is an isomorphism, it preserves the spectrum. Therefore, this boundary functions as a symmetry of the theory, known as a duality defect (For more details, see \cite{choi_noninvertible_2022}).
\begin{figure}
    \centering
    \includegraphics[width=1.0\linewidth]{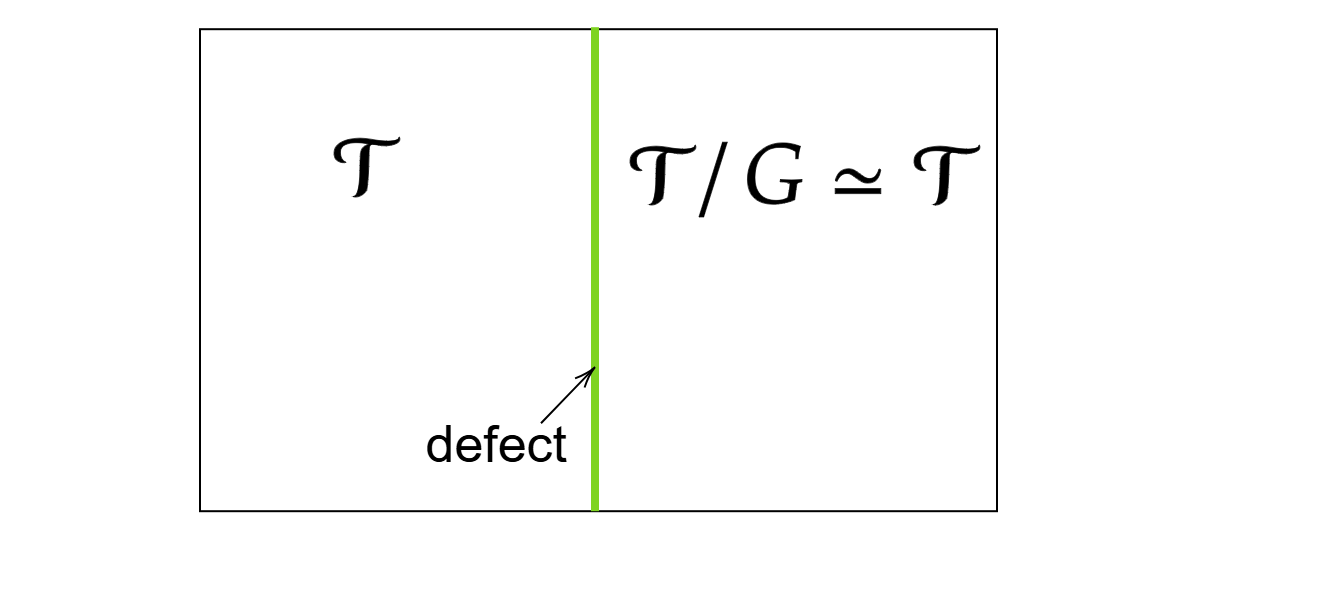}
    \caption{Pictorial realization of half-space gauging.}
    \label{fig:1}
\end{figure}

The above discussion is quite general, but how can the duality defect be calculated in the specific CFT we are focusing on? This question is equivalent to asking what the map identifying the theories is, and it can be concluded that it is T-duality. T-duality is the automorphism group of the lattice formed by the momentum $(p_L, p_R)$ in the CFT. In the case of $c=n\ (n\in\mathbb{Z})$, since this lattice has a metric of signature $(n, n)$, this automorphism group is $\mathrm{O}(n, n; \mathbb{Z})$. Indeed, such transformations change the momentum to different values and can thus be viewed as a transformation from one operator to another. Therefore, the information we need to know is twofold: which orbifold was performed and to which theory did it transform under T-duality, and by combining these, we can calculate the action of the duality defect on operators.

Let us further comment on the specific action of duality defects on local operators. The existence of a defect in the spacetime of a theory implies that the theory must satisfy certain boundary conditions on the defect. Moreover, these boundary conditions must meet specific consistency requirements. For instance, in the case of topological defects, the chiral and anti-chiral parts of the energy-momentum tensor across two regions separated by the defect must satisfy:
\begin{equation}
    T^1(z)-T^2(z)=\bar{T}^1(\bar{z})-\bar{T}^2(\bar{z}),
\end{equation}
indicating that the defect does not transfer energy and can be smoothly deformed. \footnote{If the CFT in question has extended affine Lie algebra symmetry, the boundary state must satisfy conditions preserving the current, as discussed in \cite{cardy_boundary_1989}.}

Such duality defects in CFTs have been extensively studied in the past. For example, \cite{fuchs_tft_2002} utilized duality defects to construct CFT correlators and orbifold theories. Additionally, more recent research on compact boson CFTs includes the work in \cite{argurio_non-invertible_2024,fuchs_topological_2007}.

Now, how exactly do these defects act on local operators? 
Historically, these defects are studied in the context of the folding trick\cite{wong_tunneling_1994,oshikawa_defect_1996}. Such defects, or conformal interfaces, induces a map from the Hilbert space of one side to the other. (See \cite{quella_reflection_2007} for more datails.) To summerize,
the action depends on how the orbifolded theory $\mathcal{T}/G$ is identified with the original theory $\mathcal{T}$—that is, on the ``isomorphism" between $\mathcal{T}/G$ and $\mathcal{T}$. In this context, we consider T-duality as the duality connecting the two theories. Therefore, the task is to determine how operators in the orbifolded theory $\mathcal{T}/G$ correspond to operators in the original, non-orbifolded theory $\mathcal{T}$ under T-duality.
It is worth noting, as will be discussed in later examples, that such a correspondence may include differences in overall scaling factors.

\subsection{Identifying the duality symmetries}\label{sec:2.3}
Now, let us return to the focus point of this section, the case of $c=1$. In this case, T-duality is, apart from a difference in convention regarding constants, a transformation that maps the radius $R$ to $\frac{1}{R}$, with no other possibilities. Therefore, the problem we need to consider is: ``For which values of $R$, $N$, and $M$ does $\frac{M}{N}R$ coincide with $\frac{1}{R}$?"
The simplest example is the following. First, let us consider the case where the radius is $\sqrt{k}$ (with $k \in \mathbb{Z}$). Clearly, when $N=k$ and $M=1$, it is easy to see that the theory becomes self-dual. In other words, the radius of the orbifolded theory is $\frac{1}{k}\sqrt{k}=\frac{1}{\sqrt{k}}=\frac{1}{R}$, meaning that it is T-dual to the original theory. Of course, a similar argument applies when $k$ is a rational number, and a situation where the theory is self-dual can be created by performing an orbifold with $M \neq 1$.

Now that we have confirmed the existence of a new duality defect in this case, we should ask: what kind of fusion rule governs the original symmetry and this defect? The term fusion rule here refers to embedding two symmetry elements as submanifolds, i.e., defects, in spacetime and examining their interaction. In other words, it is the defect obtained in the limit where two defects approach each other, which is an extension of the product structure obtained by the OPE (operator product expansion) of local operators. In this case, let us denote the newly created defect as $\mathcal{D}$ and the defect corresponding to the orbifolded symmetry $\mathbb{Z}_k$ as $\eta_a \ (a \in \mathbb{Z})$. Although the detailed discussion is provided in \cite{choi_self-duality_2024}, let us try to find these fusion rules. First, we will verify the fusion rule between $\eta_a$ and $\mathcal{D}$ for any $a \in \mathbb{Z}$.

In this theory, $\mathbb{Z}_k$ is gauged from the state of a global symmetry to a state considered as a gauge field. Therefore, the defect $\mathcal{D}$ is transparent to the action of $\eta_a$, meaning that
\begin{equation}
    \eta_a \times\mathcal{D}=\mathcal{D}\times\eta_a =\mathcal{D}.
\end{equation}
Also, it holds that
\begin{equation}
    \mathcal{D}\times\mathcal{D}=\sum_{a\in\mathbb{Z}_k}\eta_a .
\end{equation}
This is because the defect does not depend on orientation, so it is equal to its own orientation reversal $\bar{\mathcal{D}}$, meaning that $\mathcal{D} \times \mathcal{D} = \mathcal{D} \times \bar{\mathcal{D}}$. The result of this fusion is equal to taking the limit where the two defects collide, with a mesh of the algebra generated by the symmetry of the orbifold (in this case, the sum of defects generating the $\mathbb{Z}_k$ symmetry) inserted between them. Hence, the above result follows. (For details, see \cite{kaidi_symmetry_2023,thorngren_fusion_2024-2,thorngren_fusion_2024-3})

Let us review how the defects arising in a $\mathbb{Z}_k$-orbifold act on local operators. Here, local operators can be expressed in the form of vertex operators, so we denote them as $V_{n,w}$, where $(n,w)$ represents the charges introduced in eq.\eqref{eq:2.7}. Consider the case where $N = k$ and $M = 1$. How do $(n, w)$ transform under this orbifold?

The $\mathbb{Z}_k$-symmetry acts on $\exp(i n \theta)$ as $\exp(i n (\theta + \frac{2\pi}{k}))$. Therefore, only sectors where $n$ is a multiple of $k$ are relevant in the orbifolded theory. Operators from sectors where $n$ is not a multiple of $k$ map to non-local operators. If $n$ can be written as $n = n' k$ $(n' \in \mathbb{Z})$, then $\exp(i n \theta) = \exp(i n' (k \theta))$, and the momentum charge for this sector is described by $n'$. Thus, $n$ transforms as $n \to n' = \frac{n}{k}$.
Now, consider the winding charge $w$. Since the twisted sector is described by rational numbers with $k$ in the denominator, we have $w = \frac{w'}{k}$ $(w' \in \mathbb{Z})$. Hence, the winding charge $w$ transforms as $w \to w' = k w$.
The duality relating the orbifolded theory $\mathcal{T}/G$ to the original theory $\mathcal{T}$ is T-duality. In the $c = 1$ case, T-duality interchanges $n$ and $w$, as evident from the transformation $R \to \frac{1}{R}$ in eq.\eqref{eq:2.7}. Combining these arguments, the action of the duality defect is given by:
\begin{equation}
    (n,w)\mapsto(w’,n’)=(kw,\frac{n}{k}).
\end{equation}
An important point to note is that, at the operator level, there exists an overall multiplicative constant. This arises due to the non-trivial braiding between $\theta$ and $\phi$. Specifically, since operators include a product structure considering normal ordering, exchanging the positions of two variables corresponds to a $\pi$-rotation of $\theta$ around $\phi$. For simplicity, if $\phi$ is inserted at the origin, this results in a $\pi$-rotation of the argument of $\theta$. By recalling the Baker-Campbell-Hausdorff formula, the overall factor can be shown to be proportional to $(\exp(i \pi))^{nw}$, where the phase originates from expanding the operators in terms of Laurent series and logarithms.

Taking into account this phase and an additional constant arising from the defect's quantum dimension, the complete transformation is:
\begin{equation}\label{eq:2.12}
    V_{n,w}\mapsto \sqrt{k}\exp(i\pi nw)V_{kw,\frac{n}{k}}\ (\mathrm{for} \ n\in k\mathbb{Z}).
\end{equation}
Here, the leading factor $\sqrt{k}$ arises from calculating the quantum dimension of the defect. From here on, we will disregard such differences in constant multipliers in our discussion. Thus, the action of the duality defect can be interpreted as a transformation on the two-component vector $(n, w)^{\top}$ using a matrix. This point will be revisited when discussing the $c = 2$ case\footnote{Though the shift symmetry can be extended to $c\geq 2$ cases, it is not always true that there is a duality defect corresponding to $\mathbb{Z}_k$ shift symmetry.\cite{schweigert_kramerswannier_2008}.}, where the corresponding matrix is given by $\begin{pmatrix} 0 & k \\ \frac{1}{k} & 0 \end{pmatrix}$.

\section{The Main Section: Duality Defects in $c=2$}\label{sec:3}

\subsection{Parameters of the moduli space and T-duality}\label{sec:3.1}
Before diving into the main topic, it is essential to outline the distinctions between the $c=2$ case and the $c=1$ case reviewed in the previous section. To begin, let us consider the CFT itself. Simply put, with two compactified bosons, there are now two components in the momentum, so it takes the form:
\begin{equation}
    \begin{pmatrix}
        p_L \\ p_R
    \end{pmatrix}=\frac{1}{\sqrt{2}}
    \begin{pmatrix}
        \Lambda^{\ast}& (I-B)\\
        \Lambda^{\ast}& (I+B)
    \end{pmatrix}
    \begin{pmatrix}
        n_1 \\ n_2 \\ w_1 \\ w_2
    \end{pmatrix}
\end{equation}
where $\Lambda^{\ast}$ is the inverse of the transpose of $\Lambda$ and $n_1,n_2,w_1,w_2\in\mathbb{Z}$. In theb following, we use the notation $n=\begin{pmatrix}
    n_1\\n_2
\end{pmatrix},\ w=\begin{pmatrix}
    w_1 \\ w_2
\end{pmatrix}$.
Additionally, each of $\theta$ and $\phi$ now also have two components, so there are four $U(1)$ symmetries in this theory. From here on, we denote these symmetries as $U(1)_{n_1} \times U(1)_{n_2} \times U(1)_{w_1} \times U(1)_{w_2}$. The lattice $\Lambda$ is the generating matrix required to specify the two-dimensional torus on which the two bosons are compactified. Additionally, $B$ is a second-rank antisymmetric tensor analogous to the Kalb-Ramond field in type IIB superstring theory. In the $c=1$ case, only a trivial $B$ exists, but when $c$ is greater than or equal to 2, non-trivial $B$ effects can lead to additional complexities. For this reason, for example, \cite{nagoya_non-invertible_2023} considers only the case where $B=0$.

The most critical question is what parameters are necessary to specify the theory, and here, multiple choices exist. Of these, the choice most compatible with our approach involves introducing two complex numbers $(\tau, \rho)$, each with a positive imaginary part, defined as:
\begin{equation}\label{eq:3.2}
    \tau= \frac{G_{12}}{G_{11}}+\frac{\sqrt{\mathrm{det}G}}{G_{11}}
    \quad \rho=b+\sqrt{\det G}.
\end{equation}
In this setup, $G$ represents the torus metric defined by $G = \Lambda^{\top} \Lambda$, where the components $G_{ij}$ for $(i,j=1,2)$ correspond to elements of this Gram matrix for the compactified bosons. Meanwhile, $b$ represents the off-diagonal components of the antisymmetric $2 \times 2$ matrix given by $\Lambda^{\top} B \Lambda$. Here, $B$ serves as an analog of the Kalb-Ramond field in string theory, and its interaction with $\Lambda$ introduces modifications that affect the structure of the lattice and, thus, the duality transformations within the CFT. Introducing these parameters, often referred to as moduli, allows us to express the T-duality of this CFT as follows:
\begin{equation}
    \mathrm{P}(\mathrm{SL}(2,\mathbb{Z})_{\tau}\times\mathrm{SL}(2,\mathbb{Z})_{\rho})\rtimes (\mathbb{Z}_2 ^M \times \mathbb{Z}_2^I)
\end{equation}
Here, $\mathrm{SL}(2, \mathbb{Z})_{\tau}$ acts on $\tau$ according to
\begin{equation}
    \mathrm{SL}(2,\mathbb{Z})_{\tau}\ni
    \begin{pmatrix}
        \alpha & \beta \\
        \gamma & \delta \\
    \end{pmatrix}:\quad
    \tau\mapsto\frac{\alpha\tau +\beta}{\gamma\tau +\delta}
\end{equation}
with a similar action on $\rho$.
Additionally, each of the two $\mathbb{Z}_2$ symmetries acts as follows:
\begin{equation}
    \mathbb{Z}_2 ^M: (\tau,\rho)\mapsto(\rho,\tau),\quad \mathbb{Z}_2^I : (\tau,\rho)\mapsto (-\bar{\tau},-\bar{\rho}).
\end{equation}
For instance, $\tau$ serves as the moduli parameter for the complex structure of the torus compactifying the bosons, while $\rho$ represents the parameter for the Kähler structure. Therefore, $M$ is recognized as one type of Mirror symmetry. Meanwhile, $I$ is a spacetime inversion symmetry, as it is achieved by reversing the torus orientation. Thus, all T-dualities can be viewed as combinations of the two $\mathrm{SL}(2, \mathbb{Z})$ groups, along with four possible combinations of Mirror symmetry and spacetime inversion, ``twisted" in the product by the adjoint action of the two $\mathbb{Z}_2$ symmetries on $\mathrm{SL}(2, \mathbb{Z})$.\footnote{Naturally, the product of these elements cannot be decomposed into a simple product of $\mathrm{SL}(2, \mathbb{Z})^2$ and $\mathbb{Z}^2$, so a straightforward product law does not apply. However, as elements (i.e., when ignoring the semidirect product structure), any element can be expressed using these combinations.} The crucial point in our discussion is that T-dualities can be classified into four types based on the presence or absence of $M$ and $I$. Hence, instead of addressing all duality defects simultaneously, we approach the problem by dividing them into these four types, enabling us to tackle a generally challenging problem. In the following, we denote by $(m,i)\ (m,i=0,1)$ the
value of the two $\mathbb{Z}_2$ components $\mathbb{Z}_2^M\times\mathbb{Z}_2^I$.

The orbifold considered here involves only finite subgroup symmetries that independently rotate each of the four $U(1)$ symmetries, specifically:
\begin{equation}
    \mathbb{Z}_{N_i}\in U(1)_{n_i}, \quad 
\mathbb{Z}_{W_i}\in U(1)_{w_i}, \ i=1,2
\end{equation}
We thus focus on an orbifold generated by the direct product of these four finite abelian groups. Our next step is to clarify the transformation that these symmetries induce on the parameters $\tau$ and $\rho$.
To proceed, let us recall the definitions of $\theta$ and $\phi$:
\begin{equation}
\begin{split}
    \theta&= \frac{\Lambda^{-1}}{\sqrt{2}}(X_L+X_R)\\
    \phi&= \frac{\Lambda^{\top}}{\sqrt{2}}((I-B)X_L-(I+B)X_R)
\end{split}
\end{equation}
To simplify, we first consider the orbifolding by $\mathbb{Z}_{N_1}$. This involves combining the invariant and twisted sectors, where $\theta_1 \mapsto N_1 \theta_1$ and $\phi_1 \mapsto \frac{1}{N_1} \phi_1$, while keeping the other components unchanged. Applying this transformation to the previous expression allows us to determine how $\Lambda$ should transform. Specifically, let us represent the components of the $2 \times 2$ matrix $\Lambda$ as
\begin{equation}
    \Lambda=\begin{pmatrix} \lambda_1 & \lambda_2 \\ \lambda_3 & \lambda_4 \end{pmatrix}
\end{equation}
with the inverse matrix given by
\begin{equation}
    \Lambda ^{-1}=\frac{1}{\det \Lambda}\begin{pmatrix} \lambda_4 & -\lambda_2 \\ -\lambda_3 & \lambda_1 \end{pmatrix}
\end{equation}
Since $\theta_1 \mapsto N_1 \theta_1$, the first row of $\Lambda^{-1}$ must be scaled by $N_1$. Noting the term $\frac{1}{\det \Lambda}$ in $\Lambda^{-1}$, we see that $\lambda_3$ and $\lambda_1$ must each be scaled by $\frac{1}{N_1}$. This corresponds to scaling the first column of $\Lambda$ by $\frac{1}{N_1}$, which implies that one of the basis vectors spanning the lattice $\Lambda$ is scaled by $\frac{1}{N_1}$. Consequently, the Gram matrix $G$ of this basis transforms as follows:
\begin{equation}
    G_{11}\mapsto \frac{1}{N_1^2}G_{11},\ G_{12}\mapsto\frac{1}{N_1}G_{12},\ G_{22}\mapsto G_{22}
\end{equation}
Therefore, by equation (3.2), we obtain
\begin{equation}
    \tau\mapsto N_1\tau
\end{equation}
On the other hand, for the first component of $\phi$, denoted by $\phi_1$, the orbifold operation transforms it as $\phi_1 \mapsto \frac{1}{N_1} \phi_1$. Note that we can rewrite
\begin{equation}\label{eq:3.12}
    \phi = \frac{\Lambda^{\top}}{2}(X_L - X_R) - \frac{\Lambda^{\top}B\Lambda}{2}\theta.
\end{equation}
The first term is consistent with the above discussion, where only $\phi_1$ is scaled by $\frac{1}{N_1}$, and $\phi_2$ remains unchanged. Therefore, the question is which variable should be transformed and how to achieve the effect of scaling only the first component of the second term by $\frac{1}{N_1}$. The answer is that $b$ should be scaled by $\frac{1}{N_1}$. In fact, when
\begin{equation}
    \Lambda^{\top}B\Lambda=
    \begin{pmatrix} 0 & b \\ -b & 0 
    \end{pmatrix} \mapsto 
    \begin{pmatrix} 0 & \frac{b}{N_1} \\ -\frac{b}{N_1} & 0
    \end{pmatrix}
\end{equation}
we see that scaling the first component of $\theta$ by $N_1$ results in the first component of the second term in equation \eqref{eq:3.12} being scaled by $\frac{1}{N_1}$. Therefore, from eq.\eqref{eq:3.2} again, we find
\begin{equation}
    \rho\mapsto\frac{1}{N_1}\rho.
\end{equation}
The same logic applies to the other abelian groups $\mathbb{Z}_{N_2, W_1, W_2}$. For instance, in the case of a $\mathbb{Z}_{N_2}$ orbifold, scaling the second column of $\Lambda$ by $\frac{1}{N_2}$ will yield the following transformation in $G$:
\begin{equation}
    G_{11}\mapsto G_{11},\ G_{12}\mapsto \frac{1}{N_2}G_{12},\ G_{22}\mapsto\frac{1}{N_2^2}G_{22}.
\end{equation}
Similarly, $b$ will be scaled by $\frac{1}{N_2}$. Summing up these transformations, we obtain the following proposition:
\begin{prop}
    Let $\mathcal{T}$ be the theory of the compactified boson CFT at $c=2$, with moduli parameters given by $(\tau, \rho)$. Consider the finite abelian group
    \begin{equation}
        \mathbb{Z}_{N_i}\in U(1)_{n_i}, \quad \mathbb{Z}_{W_i}\in U(1)_{w_i}, \ i=1,2
    \end{equation}
    which induces shifts in $\theta$ and $\phi$, and let $\mathcal{T}'$ be the orbifolded theory by this group. If $(\tau', \rho')$ are the parameters for $\mathcal{T}'$, then
    \begin{equation}
        \tau '=\frac{N_1 W_2}{N_2 W_1}\tau ,\quad \rho '=\frac{W_1 W_2}{N_1 N_2}\rho.
    \end{equation}
\end{prop}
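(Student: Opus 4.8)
The plan is to obtain the joint transformation by decomposing the orbifold into its four commuting cyclic factors $\mathbb{Z}_{N_1},\mathbb{Z}_{N_2},\mathbb{Z}_{W_1},\mathbb{Z}_{W_2}$, recording how each one rescales a single column of the generator $\Lambda$ and the parameter $b$, and then multiplying the results. The two momentum factors are already handled in the discussion preceding the statement: $\mathbb{Z}_{N_1}$ scales the first column of $\Lambda$ by $1/N_1$ and sends $b\mapsto b/N_1$, giving $\tau\mapsto N_1\tau$ and $\rho\mapsto\rho/N_1$ through \eqref{eq:3.2}, and $\mathbb{Z}_{N_2}$ does the analogous thing on the second column, giving $\tau\mapsto\tau/N_2$, $\rho\mapsto\rho/N_2$. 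Hence the only genuinely new input needed is the effect of the two winding factors, after which the composition must be shown to be purely multiplicative.

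For $\mathbb{Z}_{W_1}$ I would first fix the sector structure. The momentum orbifold combines invariant and twisted sectors so that $\theta_1\mapsto N_1\theta_1$, $\phi_1\mapsto\phi_1/N_1$; the winding orbifold acts on the conjugate variable and therefore produces the \emph{dual} rescaling $\phi_1\mapsto W_1\phi_1$, $\theta_1\mapsto\theta_1/W_1$, mirroring the $c=1$ fact that a winding $\mathbb{Z}_W$ orbifold enlarges the radius, $R\mapsto WR$. Reading $\theta=\frac{\Lambda^{-1}}{\sqrt{2}}(X_L+X_R)$, the requirement $\theta_1\mapsto\theta_1/W_1$ scales the first row of $\Lambda^{-1}$ by $1/W_1$, that is the first column of $\Lambda$ by $W_1$, so that
\begin{equation}
    G_{11}\mapsto W_1^2 G_{11},\quad G_{12}\mapsto W_1 G_{12},\quad G_{22}\mapsto G_{22}.
\end{equation}
Hence $\det G\mapsto W_1^2\det G$, and substituting into \eqref{eq:3.2} yields $\tau\mapsto\tau/W_1$. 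For $\rho$ I would use the rewriting \eqref{eq:3.12}: matching the first component of $\phi$ to $\phi_1\mapsto W_1\phi_1$ with $\theta_2$ held fixed forces $b\mapsto W_1 b$, and then $\rho=b+\sqrt{\det G}\mapsto W_1\rho$. The identical argument on the second column gives, for $\mathbb{Z}_{W_2}$, the factors $\tau\mapsto W_2\tau$ and $\rho\mapsto W_2\rho$.

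It remains to compose. Since $\mathbb{Z}_{N_1}$ and $\mathbb{Z}_{W_1}$ both act by scalar multiplication on the first column of $\Lambda$ (by $1/N_1$ and $W_1$) while $\mathbb{Z}_{N_2},\mathbb{Z}_{W_2}$ act the same way on the second column (by $1/N_2$ and $W_2$), all four operations commute and their net effect is to scale column $1$ by $c_1 := W_1/N_1$, column $2$ by $c_2 := W_2/N_2$, and $b$ by $c_1 c_2$. Consequently $G_{11}\mapsto c_1^2 G_{11}$, $G_{12}\mapsto c_1 c_2 G_{12}$, $\sqrt{\det G}\mapsto c_1 c_2\sqrt{\det G}$, so from \eqref{eq:3.2} each term of $\tau$ acquires the factor $c_2/c_1$ and each term of $\rho$ the factor $c_1 c_2$:
\begin{equation}
    \tau'=\frac{c_2}{c_1}\,\tau=\frac{N_1 W_2}{N_2 W_1}\tau,\qquad \rho'=c_1 c_2\,\rho=\frac{W_1 W_2}{N_1 N_2}\rho.
\end{equation}

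The step I expect to be the main obstacle is pinning down the winding sector correctly: one must get the \emph{direction} of the rescaling right, namely that $\mathbb{Z}_{W_i}$ enlarges rather than shrinks the $i$-th lattice direction, and then extract the non-obvious transformation $b\mapsto W_i b$ from the cross term in \eqref{eq:3.12}, which is only visible once both components of $\phi$ are required to be consistent simultaneously (the off-diagonal $b$ couples $\phi_1$ to $\theta_2$ and $\phi_2$ to $\theta_1$). By contrast, the final composition carries no ordering subtlety, because the four generators act diagonally and independently on the two columns of $\Lambda$, so they genuinely multiply.
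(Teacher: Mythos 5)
Your proposal is correct and follows essentially the same route as the paper: track how each cyclic factor rescales one column of $\Lambda$ (hence $G$) and the parameter $b$, then read off the effect on $(\tau,\rho)$ from \eqref{eq:3.2} and multiply the four contributions. You actually spell out the winding factors $\mathbb{Z}_{W_i}$ and the final composition explicitly, which the paper only gestures at with ``the same logic applies,'' and your signs and factors all check out.
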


As before, the transformations induced on $(\tau, \rho)$ by the orbifold have become clear, but it remains to determine how these transformations affect the charges $(n, w)$, as was done in Section \ref{sec:2.3}. Additionally, the action of T-duality must be understood, which is significantly more complex compared to the $c=1$ case and thus not easily written explicitly. However, this is discussed in detail in \cite{damia_exploring_2024}, and we will review their arguments.

Let us first consider the effect of the orbifold. To do so, recall how each $\mathbb{Z}_{N_i, W_i}$ orbifold scales $(\theta_1, \theta_2, \phi_1, \phi_2)$. For example, under the subgroup $\mathbb{Z}_{N_1} \times \mathbb{Z}_{W_1} \subset U(1){n_1} \times U(1)_{w_1}$, it follows from the discussion in Section \ref{sec:2.3} that the charges transform into\footnote{Note that $n,w$ are vectors with two elements.}
\begin{equation}
    \begin{pmatrix}
        n' \\ w'
    \end{pmatrix}= \begin{pmatrix}
         \frac{N_1}{W_1} & & & \\ & 1 & & \\
         & & \frac{W_1}{N_1} & \\ & & & 1
    \end{pmatrix}^{-1}
    \begin{pmatrix}
        n \\ w
    \end{pmatrix}.
\end{equation}
Thus, we assign a matrix $\sigma_{N_1, W_1}$ to the orbifold induced by the subgroup $U(1){n_1} \times U(1){w_1}$ and another matrix $\tilde{\sigma}$ to the subgroup $U(1){n_2} \times U(1){w_2}$. The definition is as follows:
\begin{equation}\label{eq:new3.19}
    \sigma_{N_1,W_1}=\begin{pmatrix}
         \frac{N_1}{W_1} & & & \\ & 1 & & \\
         & & \frac{W_1}{N_1} & \\ & & & 1
    \end{pmatrix},
    \tilde{\sigma}_{N_2,W_2}=\begin{pmatrix}
         1 & & & \\ & \frac{N_2}{W_2} & & \\
         & & 1 & \\ & & & \frac{W_2}{N_2}
    \end{pmatrix}
\end{equation}
and the action on the four-component vector $(n', w')$ is given by the standard product operation under $(\sigma^\top)^{-1}$.

Next, consider T-duality. In the $c=1$ case, T-duality was simply represented by the matrix $\begin{pmatrix} 0 & 1 \\ 1 & 0 \end{pmatrix}$. However, for $c=2$, T-duality is described by a combination of two $\SL(2, \mathbb{Z})$ transformations and two $\mathbb{Z}_2$ symmetries, necessitating explicit specification of the corresponding components. This can be done as \cite{damia_exploring_2024}
\begin{equation}\label{eq:new3.20}
\begin{split}
    \begin{pmatrix}
        a & -b & 0 & 0 \\ -c & d & 0 & 0 \\
        0 & 0 & d & c \\ 0 & 0 & b & a
    \end{pmatrix},\ \mathrm{for}\ \begin{pmatrix}
        a & b \\ c & d
    \end{pmatrix}\in \SL(2,\mathbb{Z})\\
    \begin{pmatrix}
        d' & 0 & 0 & c' \\ 0 & d' & -c & 0 \\
        0 & -b' & a' & 0 \\ b' & 0 & 0 & a'
    \end{pmatrix},\ \mathrm{for}\ \begin{pmatrix}
        a' & b' \\ c' & d'
    \end{pmatrix}\in \SL(2,\mathbb{Z})\\
    M=\begin{pmatrix}
        0 & 0 & 1 & 0 \\
        0 & 1 & 0 & 0 \\
        1 & 0 & 0 & 0 \\
        0 & 0 & 0 & 1
    \end{pmatrix}\in \mathbb{Z}_2^M
    ,\quad
    I=\begin{pmatrix}
        -1 & 0 & 0 & 0 \\
        0 & 1 & 0 & 0 \\
        0 & 0 & -1 & 0\\
        0 & 0 & 0 & 1
    \end{pmatrix}\in\mathbb{Z}_2^I
\end{split}
\end{equation}
Using this approach, one can determine the action of duality defects on the charges $(n, w)$, which is equivalent to a linear automorphism on the theory itself. Alternatively, this can be interpreted as a linear transformation on the quadratic form describing the spectrum of the theory.

The specific form of this quadratic form, known as the generalized metric, is described in \cite{becker_string_2006} and similar references. It is obtained by expressing the spectrum of vertex operators $V_{n, w}$ in terms of the charges $(n, w)$. The generalized metric is given by
\begin{equation}\label{eq:3.21}
    \mathcal{G}=\frac{1}{\sqrt{2}}\begin{pmatrix}
        G^{-1} & BG^{-1} \\ -G^{-1}B & G-BG^{-1}B
    \end{pmatrix}  .
\end{equation}
The action of an orbifold or T-duality can then be expressed as $\mathcal{G} \mapsto T^{-1} \mathcal{G} T$, where $T$ is a matrix combining the transformations described by equations \eqref{eq:new3.19} and \eqref{eq:new3.20}.
From this, the condition for realizing a duality defect—namely, that the orbifolded theory is T-dual to itself—can also be written in terms of the generalized metric:
\begin{equation}\label{eq:new3.22}
    \mathcal{G}=T^{-1}\mathcal{G}T.
\end{equation}
Thus, the classification of duality defects can be reduced to solving this matrix equation. However, in practice, this is far from straightforward. In fact, constructing the general solution to such matrix equations is not feasible. From a physical perspective, T-duality swaps $(G - B G^{-1} B) \leftrightarrow G^{-1}$ and $BG^{-1} \leftrightarrow -G^{-1}B$, making it clear that finding transformations inducing such exchanges is not trivial.

That said, the problem simplifies when $B = 0$, allowing general solutions to be constructed using this generalized metric formulation \cite{nagoya_non-invertible_2023}. However, since our focus is on cases with non-vanishing $B$-fields, we must consider alternative methods. These methods, involving certain quadratic equations, are introduced in the following sections.

\subsection{Quadratic equations for duality defects}\label{sec:3.2}
From here, we will address our main question: ``When is the orbifolded theory T-dual to the original theory?" This inquiry is motivated by our desire to classify the types of duality defects present in the theory. In line with this motivation, we will classify the duality defects into four types, as we did in Section \ref{sec:3.1}. Specifically, we will first divide the cases based on the values of $(m,i)$, and finally summarize the results of these classifications.

\subsubsection{Duality defects for $(m,i)=(0,0)$}\label{sec:3.2.1}
let us first consider the simplest case, $(m,i) = (0,0)$, where the T-duality between the orbifolded theory and the original theory only involves $\mathrm{SL}(2,\mathbb{Z})$. Written in terms of an equation, this becomes:
\begin{equation}\label{eq:3.18}
    \frac{N_1 W_2}{N_2 W_1}\tau = \frac{\alpha \tau +\beta}{\gamma \tau +\delta},\quad \frac{W_1 W_2}{N_1 N_2}\rho =
    \frac{\alpha '\rho + \beta '}{\gamma '\rho + \delta '}
\end{equation}
where $\begin{pmatrix}
    \alpha & \beta \\ \gamma & \delta
\end{pmatrix}$ is an elements of $\mathrm{SL}(2,\mathbb{Z})$ and so as $\begin{pmatrix}
    \alpha ' & \beta ' \\ \gamma '& \delta '
\end{pmatrix}$. The problem posed by this equation is: What values for the orders of the orbifold $N_i, W_i$ ensure the existence of $\alpha, \beta, \gamma, \delta, \alpha', \beta',\gamma,'\delta'$ that satisfy the equation \eqref{eq:3.18}?

When such values of $N_i, W_i$ are given, calculating the corresponding element of $\mathrm{SL}(2,\mathbb{Z})$ helps determine the fusion rule for that particular defect. We have simplified the problem by transforming it into a form that makes it much easier to solve.

Before going into the theorem, we introduce the elemantary lemma which is essential to the proof:
\begin{lem}\label{lem:3.1}
    Let $f(x)\in \mathbb{R}[x]$ and $\mathrm{deg}f=2$, and $f(x)=0$ have 
    a solution $z\in\mathbb{C}$ with $\Im z\neq 0$.
    Then the other solution of the quadratic equation $f(x)=0$ is $\bar{z}$, meaning that 
    \begin{equation}
        f(x)=C(x-z)(x-\bar{z}),\ C\in\mathbb{R}.
    \end{equation}
\end{lem}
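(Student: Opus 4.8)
The plan is to exploit the fact that complex conjugation is a field automorphism of $\mathbb{C}$ that fixes $\mathbb{R}$ pointwise. First I would write $f(x) = a x^2 + b x + c$ with $a, b, c \in \mathbb{R}$, where $a \neq 0$ is guaranteed by the assumption $\deg f = 2$. The whole argument then rests on transporting the root $z$ to $\bar z$ by conjugation, so the bulk of the work is simply checking that conjugation interacts correctly with polynomial evaluation.

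Concretely, since conjugation respects sums and products and fixes each real coefficient, I would observe that $\overline{f(\zeta)} = f(\bar\zeta)$ for every $\zeta \in \mathbb{C}$. Applying this identity to the given root $z$ and using $f(z) = 0$ yields
\begin{equation}
    f(\bar z) = \overline{f(z)} = \overline{0} = 0,
\end{equation}
so $\bar z$ is also a root of $f$. The hypothesis $\Im z \neq 0$ then forces $z \neq \bar z$, so $z$ and $\bar z$ are \emph{two distinct} roots. Because a degree-$2$ polynomial over a field has at most two roots, these exhaust the roots of $f$, and comparing leading coefficients in the factorization gives $f(x) = a(x - z)(x - \bar z)$; setting $C = a \in \mathbb{R}$ completes the claim.

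An equivalent route, which I would mention as a cross-check, is to apply the quadratic formula directly: the roots are $\tfrac{-b \pm \sqrt{b^2 - 4ac}}{2a}$, and the condition $\Im z \neq 0$ is equivalent to the discriminant being negative, $b^2 - 4ac < 0$. In that case $\sqrt{b^2 - 4ac} = i\sqrt{4ac - b^2}$ is purely imaginary, so the two roots are $\tfrac{-b}{2a} \pm i\tfrac{\sqrt{4ac - b^2}}{2a}$, which are manifestly a conjugate pair.

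There is no genuine obstacle here, as the statement is elementary; the only points requiring any care are the verification that conjugation commutes with evaluation of a real-coefficient polynomial (so that $f(z)=0$ propagates to $f(\bar z)=0$) and the observation that $z \neq \bar z$ under $\Im z \neq 0$, which is precisely what upgrades the two conjugate roots into the \emph{complete} root set and licenses the factorization with a real constant $C$.
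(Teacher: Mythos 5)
Your proof is correct and complete: the key identity $\overline{f(\zeta)}=f(\bar\zeta)$ for real-coefficient $f$, the observation that $\Im z\neq 0$ makes $z$ and $\bar z$ distinct, and the count of at most two roots for a degree-two polynomial together give exactly the claimed factorization with $C=a\in\mathbb{R}$. The paper itself offers no proof of this lemma, labelling it elementary, so there is nothing to compare against; your argument is the standard one and fills that gap adequately.
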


Then our main theorem is the following:

\begin{thm}\label{thm:3.1}
    The existence of $\alpha, \beta, \gamma, \delta, \alpha', \beta',\gamma,'\delta'$ that satisfy the equation \eqref{eq:3.18} above is equivalent to the following statement:
\begin{screen}
    Two quadratic equations in $x, y, x', y'$:
\begin{equation}\label{eq:3.19}
    \begin{split}
    (N_2 W_1)^2 x^2 -2N_1 N_2 W_1 W_2 (\Re \tau) xy + (N_1 W_2)^2 |\tau|^2 y^2 
    = N_1 N_2 W_1 W_2 \\
    (N_1 N_2)^2 x'^2 -2N_1 N_2 W_1 W_2 (\Re \rho) x'y' + (W_1 W_2)^2 |\rho|^2y'^2
    = N_1 N_2 W_1 W_2
\end{split}
\end{equation}

have integer solutions $(x_0, y_0, x'_0, y'_0)$. 
In addition, given these solutions, the four numbers
\begin{equation}\label{eq:3.20}
\begin{split}
     z_0=-\frac{N_1 W_2}{N_2 W_1}|\tau|^2 y_0,\quad
     w_0=-2(\Re \tau) y_0+ \frac{N_2 W_1}{N_1 W_2}x_0,\\
     z'_0=-\frac{W_1 W_2}{N_1 N_2}|\rho|^2 y'_0,\quad
     w'_0=-2(\Re \rho) y'_0 +\frac{N_1 N_2}{W_1 W_2}x'_0
\end{split}
\end{equation}
are all integers.
\end{screen}

In this case, the corresponding $\mathrm{SL}(2,\mathbb{Z})$ elements are:
\begin{equation}\label{eq:3.26}
    \begin{split}
        \alpha =x_0,\ \beta =z_0,\ \gamma =y_0,\ \delta =w_0,\\
        \alpha'=x'_0,\ \beta'=z_0,\ \gamma'=y'_0,\ \delta'=w'_0
    \end{split}
\end{equation}
which corresponds to the desired statement.
\end{thm}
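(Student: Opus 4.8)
The plan is to treat the two equations in \eqref{eq:3.18} separately, since they are structurally identical: whatever works for the $\tau$-equation transfers verbatim to the $\rho$-equation after the substitutions $\tau\mapsto\rho$, $\frac{N_1W_2}{N_2W_1}\mapsto\frac{W_1W_2}{N_1N_2}$, and unprimed $\mathrm{SL}(2,\mathbb{Z})$ entries $\mapsto$ primed ones. Writing $k=\frac{N_1W_2}{N_2W_1}$, I would first clear denominators in $k\tau=\frac{\alpha\tau+\beta}{\gamma\tau+\delta}$ by cross-multiplication to obtain the real-coefficient quadratic
\[
    k\gamma\,\tau^2+(k\delta-\alpha)\,\tau-\beta=0.
\]
The coefficients are real because $\alpha,\beta,\gamma,\delta$ are integers and $k\in\mathbb{R}$, which is exactly the hypothesis needed to invoke Lemma \ref{lem:3.1}.

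Since $\tau$ lies in the upper half-plane we have $\Im\tau\neq0$, so by Lemma \ref{lem:3.1} both $\tau$ and $\bar\tau$ are roots and the polynomial $k\gamma\,x^2+(k\delta-\alpha)\,x-\beta$ is a real scalar multiple of $x^2-2(\Re\tau)\,x+|\tau|^2$, with leading coefficient $k\gamma$. Matching the constant and linear coefficients gives the two identities $\beta=-k\gamma|\tau|^2$ and $k\delta=\alpha-2k\gamma(\Re\tau)$. Setting $\alpha=x_0$, $\gamma=y_0$ and solving for the remaining entries reproduces exactly the expressions \eqref{eq:3.20} for $z_0=\beta$ and $w_0=\delta$; in other words, once $(\alpha,\gamma)$ are fixed the entries $(\beta,\delta)$ are forced, not free. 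I would remark that this coefficient comparison stays valid in the degenerate case $\gamma=0$, where both polynomials vanish identically and the same two identities hold, so no separate case distinction is required.

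The only remaining condition on $\begin{pmatrix}\alpha&\beta\\\gamma&\delta\end{pmatrix}$ is that it lie in $\mathrm{SL}(2,\mathbb{Z})$, i.e.\ $\alpha\delta-\beta\gamma=1$. Substituting the expressions for $\beta$ and $\delta$ just derived and clearing denominators by multiplying through by $(N_2W_1)^2$ should collapse this determinant condition precisely into the first quadratic equation of \eqref{eq:3.19}. The conceptual payoff is that the determinant condition is not an additional constraint stacked on the conic — it \emph{is} the conic. This yields the forward implication: an $\mathrm{SL}(2,\mathbb{Z})$ solution of \eqref{eq:3.18} produces an integer point $(x_0,y_0)$ on \eqref{eq:3.19} together with integer $z_0,w_0$. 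For the converse I would run the computation in reverse: given an integer solution $(x_0,y_0)$ of \eqref{eq:3.19} for which $z_0,w_0$ from \eqref{eq:3.20} are integers, define the entries by \eqref{eq:3.26}; the quadratic equation guarantees $\alpha\delta-\beta\gamma=1$, and reinserting these values into the Möbius transformation recovers $k\tau=\frac{\alpha\tau+\beta}{\gamma\tau+\delta}$ because $\tau$ was constructed as a root of the associated quadratic. Repeating everything for $\rho$ closes the equivalence.

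I expect the genuine step — as opposed to bookkeeping — to be the coefficient matching through Lemma \ref{lem:3.1}: the entire reduction rests on the fact that a real quadratic with a prescribed non-real root $\tau$ is determined up to an overall real scale, which is exactly what converts the transcendental-looking self-duality constraint into an algebraic (indeed quadratic Diophantine) one. The remaining difficulty is purely organizational, namely keeping the four denominators $N_1,N_2,W_1,W_2$ correctly distributed when clearing fractions; it is there, rather than in any conceptual point, that a stray sign or factor is most likely to creep in.
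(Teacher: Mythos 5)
Your proposal is correct and follows essentially the same route as the paper: cross-multiply, apply Lemma \ref{lem:3.1} to match coefficients and force $\beta,\delta$ in terms of $\alpha,\gamma$, then convert the determinant condition into the quadratic Diophantine equation \eqref{eq:3.19}. The only (valid) deviation is that you absorb the $\gamma=0$ case into the coefficient-matching identities rather than treating it separately as the paper does; this works because a real linear polynomial with a non-real root must vanish identically, yielding the same conclusions $\beta=0$ and $k\delta=\alpha$.
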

\begin{proof}
    First, we consider only $\tau$, which is the first part
    of eq. \eqref{eq:3.18}, and deforming the equation, we obtain
    \begin{equation}\label{eq:3.22}
        \begin{split}
            &\frac{N_1 W_2}{N_2 W_1}\tau = \frac{\alpha \tau +\beta}{\gamma \tau +\delta} \\
        \Leftrightarrow & \frac{N_1 W_2}{N_2 W_1}\tau
        \left(\gamma \tau+\delta
        \right)= \alpha\tau +\beta \\
        \Leftrightarrow & 
        \frac{N_1 W_2}{N_2 W_1}\gamma\tau^2 +
        \left(
        \frac{N_1 W_2}{N_2 W_1}\delta -\alpha
        \right)\tau -\beta=0,
        \end{split}
    \end{equation}
which implies that $\tau$ is one of a solution of a quadratic equation (The case $\gamma=0$ when this is no longer quadratic will be discussed later.)
\begin{equation}\label{eq:3.23}
        \frac{N_1 W_2}{N_2 W_1}\gamma x^2 +
        \left(
        \frac{N_1 W_2}{N_2 W_1}\delta -\alpha
        \right)x -\beta=0.
\end{equation}
What is important is that this is a quadratic equation of real valued coefficients, as we can make use of Lemma \ref{lem:3.1}.
Applying Lemma\ref{lem:3.1} to the equation \eqref{eq:3.23},
we find 
\begin{equation}
    \frac{N_1 W_2}{N_2 W_1}\gamma x^2 +
        \left(
        \frac{N_1 W_2}{N_2 W_1}\delta -\alpha
        \right)x -\beta=
        C(x-\tau)(x-\bar{\tau})
\end{equation}
Therefore comparing two coefficients of both side, we have
\begin{equation}\label{eq:new3.26}
    \begin{split}
        &C=\frac{N_1 W_2}{N_2 W_1}\gamma,\ 
        \frac{N_1 W_2}{N_2 W_1}\delta -\alpha=-2C\Re\tau,\ 
        -\beta=C|\tau|^2\\
        \Leftrightarrow & 
        \delta=
        \frac{N_2 W_1}{N_1 W_2}\alpha -2(\Re\tau)\gamma,\ 
        \beta =-|\tau|^2\frac{N_1 W_2}{N_2 W_1}\gamma.
    \end{split}
\end{equation}
In order to guarantee that $\det\begin{pmatrix}
    \alpha & \beta \\ \gamma & \delta
\end{pmatrix}=1$, we must impose $\alpha\delta-\beta\gamma=1$.
Hence, replacing $\delta,\beta$ for ones in eq.\eqref{eq:new3.26},
we have
\begin{equation}
    \begin{split}
        & \alpha\delta- \beta\gamma =1 \\
        \Leftrightarrow & 
        \alpha\left( \frac{N_2 W_1}{N_1 W_2}\alpha -2(\Re\tau)\gamma
        \right)
        +|\tau|^2\frac{N_1 W_2}{N_2 W_1}\gamma =1\\
        \Leftrightarrow & (N_2 W_1)^2\alpha -2(\Re\tau)N_1 N_2 W_1 W_2 \alpha\gamma +(N_1 W_2)^2|\tau|^2\gamma =
        N_1 N_2 W_1 W_2,
    \end{split}
\end{equation}
which means that $\alpha$ (resp. $\gamma$) is a solution of the first line of eq.\eqref{eq:3.19} for $x$ (resp. $y$).
There is another condition we must impose. It is 
that $\alpha,\beta,\gamma,\delta$ are all integers.
To impose this, we should say the solution of eq.\eqref{eq:3.19} must be integers, and additionally impose $\beta,\delta$ are
integer. That is why there is an additional condition as in eq.\eqref{eq:3.20}. 
Let us consider the case $\gamma=0$. From eq.\eqref{eq:3.22}
one finds 
\begin{equation}
    \alpha=\frac{N_1 W_2}{N_2 W_1}\delta,\quad \beta=0
\end{equation}
because $\Im\tau\neq 0$.
Moreover, due to $\alpha\delta=1$, $\alpha,\delta =\pm 1$
and we choose $\alpha=1$ without loss of generality.
Then we get $N_1 W_2=N_2 W_1$ and this is equivalent to 
that the first line of eq.\eqref{eq:3.19} has a solution 
$(x,y)=(1,0)$, which means that the statement of the theorem holds for $\gamma=0$.
To end this proof, we consider the equation for $\rho$:
\begin{equation}
    \begin{split}
        &\frac{W_1 W_2}{N_1 N_2}\rho =\frac{\alpha '\rho+\beta'}{\gamma'\rho +\delta}\\
        \Leftrightarrow & \frac{W_1 W_2}{N_1 N_2}\gamma'\rho^2
        +\left(\frac{W_1 W_2}{N_1 N_2}\delta' -\alpha' 
        \right)-\beta'=0.
    \end{split}
\end{equation}
This case can be proved by the same way as for $\tau$, replacing
$N_1 \leftrightarrow W_1$ and $\tau\leftrightarrow\rho$.
Then the statement of the theorem follows because the opposite direction of the statement can be proved by tracing back this argument.
\end{proof}
Moreover, note that the correspondence between the eight integers and the two $\SL(2, \mathbb{Z})$ elements, as shown in this equation \eqref{eq:3.26}, remains consistent not only for this theorem but also for all subsequent theorems up to Theorem \ref{thm:3.4}. Therefore, this correspondence will be omitted in the discussions that follow.

Using quadratic equations to explore duality defects has a key advantage: we avoid solving complex matrix equations. Generally, solving equations like equation \eqref{eq:new3.22} is challenging. However, by using equations such as equation \eqref{eq:3.19}, we can easily address simpler questions, like when $(x, y) = (0, 1)$ is a solution. (This topic will be discussed in later sections.)

To perform a consistency check, let us verify that performing a trivial orbifold with $N_1 = N_2 = W_1 = W_2 = 1$ results in the theory being identical to itself—that is, yielding the identity element of $\mathrm{SL}(2, \mathbb{Z})$. Substituting $N_1 = N_2 = W_1 = W_2 = 1$ into equation \eqref{eq:3.19} gives
\begin{equation}
    \begin{split}
        x^2-2(\Re\tau)xy+|\tau|^2y^2=1\\
        x'^2-2(\Re\rho)x'y'+|\rho|^2y'^2=1,
    \end{split}
\end{equation}
which clearly has solutions $x = 1, y = 0, x' = 1, y' = 0$ for any values of $\tau$ and $\rho$. For this solution, one can see that the $\SL(2,\mathbb{Z})$ element is the identity matrix.

As an additional consideration, when either the absolute value of $\tau$ or $\rho$ is equal to 1, this equation has solutions with $x = 0, y = 1$ or $x' = 0, y' = 1$. Such solutions suggest the presence of a duality defect involving S-duality as an element of $\mathrm{SL}(2, \mathbb{Z})$. This insight is difficult to obtain through the matrix representation using a generalized metric, underscoring the utility of the quadratic equation approach in analyzing duality defects.

\subsubsection{Duality defects for $(m,i)=(1,0)$}\label{sec:3.2.2}
Next, we discuss the case where the duality includes a non-trivial mirror symmetry, specifically $(m,i) = (1,0)$. When a non-trivial $\mathbb{Z}_2$ action on $(\tau, \rho)$ is present, we cannot directly apply the arguments from Section \ref{sec:3.2.1}. In the previous argument, we could use a quadratic equation satisfied by $\tau$ or $\rho$, as both sides of the equation in eq.\eqref{eq:3.18} involved the same parameter, either $\tau$ or $\rho$. In this case, however, the parameters on both sides differ, preventing us from straightforwardly creating a quadratic equation.

To address this, we can exploit the fact that $\tau$ and $\rho$ are linearly dependent over $\mathbb{R}$. Using equation \eqref{eq:3.2}, there exist real numbers $p$, $q$, and $t$ such that
\begin{equation}
    p\tau =\rho+t.
\end{equation}
In the case where the CFT is rational\footnote{This is equivalent for the compactifying torus to have complex multiplication \cite{gukov_rational_2004}. For the explanation of rationality of compact boson CFT, see \cite{wendland2000moduli,moore_attractors_2003,moore_arithmetic_2007}.}, meaning that there exists a negative integer $D$ such that $\tau, \rho \in \mathbb{Q}(\sqrt{D})$, there exist integers $p$, $q$, and $t$ such that
\begin{equation}\label{eq:3.32}
    p\tau=q\rho +t.
\end{equation}
In general, we only need to introduce $q$ when dealing with rational cases, but for convenience, we will use three real numbers, $p$, $q$, and $t$, in both rational and non-rational cases, setting $q = 1$ in non-rational cases. This relationship between $\tau$ and $\rho$ allows us to align the parameters on both sides as in equation \eqref{eq:3.18}.

Specifically, the orbifold operation results in $(\tau, \rho) \mapsto \left(\frac{N_1 W_2}{N_2 W_1} \tau, \frac{W_1 W_2}{N_1 N_2} \rho \right)$, and applying $\mathrm{SL}(2, \mathbb{Z})$ to $(\tau, \rho)$ followed by mirror symmetry yields $(\tau, \rho) \mapsto \left(\frac{\alpha' \rho + \beta'}{\gamma' \rho + \delta'}, \frac{\alpha \tau + \beta}{\gamma \tau + \delta} \right)$. Combining these, we have
\begin{equation}
        \frac{N_1 W_2}{N_2 W_1}\tau=\frac{\alpha’\rho +\beta’}{\gamma’\rho +\delta'},\quad
        \frac{W_1 W_2}{N_1 N_2}\rho=\frac{\alpha\tau +\beta}{\gamma\tau +\delta},
\end{equation}
and using $p \tau = q \rho + t$, we obtain
\begin{equation}\label{eq:3.34}
    \frac{N_1 W_2}{pN_2 W_1}(q\rho+t)=\frac{\alpha’\rho +\beta’}{\gamma’\rho +\delta'},\quad
    \frac{W_1 W_2}{qN_1 N_2}(p\tau-t)=\frac{\alpha\tau +\beta}{\gamma\tau +\delta}.
\end{equation}
With this setup, we arrive at the following theorem.
\begin{thm}\label{thm:3.2}
    The existence of $\alpha, \beta, \gamma, \delta, \alpha', \beta',\gamma,'\delta'$ that satisfy the equation \eqref{eq:3.34} above is equivalent to the following statement:
    \begin{screen}
        Two quadratic equations in $x, y, x', y'$:
\begin{equation}\label{eq:3.35}
    \begin{split}
    p(N_2 W_1)^2 x^2 -2pN_1 N_2 W_1 W_2 (\Re \tau) xy + p(N_1 W_2)^2 |\tau|^2 y^2 
    = qN_1 N_2 W_1 W_2 \\
    q(N_1 N_2)^2 x'^2 -2qN_1 N_2 W_1 W_2 (\Re \rho) x'y' + q(W_1 W_2)^2 |\rho|^2y'^2
    = pN_1 N_2 W_1 W_2
\end{split}
\end{equation}

have integer solutions $(x_0, y_0, x'_0, y'_0)$. 
In addition, given these solutions, the four numbers
\begin{equation}\label{eq:3.36}
\begin{split}
     z_0=\frac{t}{q}x_0-\frac{pN_1 W_2}{qN_2 W_1}|\tau|^2 y_0,\quad
     w_0=-\left(2(\Re \rho)+\frac{t}{q} \right)y_0+ \frac{pN_2 W_1}{qN_1 W_2}x_0,\\
     z'_0=-\frac{t}{p}x_0'-\frac{qW_1 W_2}{pN_1 N_2}|\rho|^2 y'_0,\quad
     w'_0=-\left(2(\Re \tau)-\frac{t}{p} \right)y'_0 +\frac{qN_1 N_2}{pW_1 W_2}x'_0
\end{split}
\end{equation}
are all integers.
    \end{screen}
\end{thm}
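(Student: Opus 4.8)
The plan is to mimic the proof of Theorem~\ref{thm:3.1}, treating the two relations in \eqref{eq:3.34} separately, with the new ingredient that the linear dependence $p\tau = q\rho + t$ has already been used to rewrite each relation through a single modulus. Indeed, the second relation in \eqref{eq:3.34} involves only $\tau$ and the first only $\rho$, so each can be cleared of denominators to produce a real-coefficient quadratic. I would start from the $\tau$-relation $\frac{W_1 W_2}{qN_1 N_2}(p\tau-t)=\frac{\alpha\tau +\beta}{\gamma\tau +\delta}$; multiplying out gives
\begin{equation}
\frac{W_1 W_2 p\gamma}{qN_1 N_2}\tau^2 + \left(\frac{W_1 W_2(p\delta - t\gamma)}{qN_1 N_2} - \alpha\right)\tau - \left(\frac{W_1 W_2 t\delta}{qN_1 N_2} + \beta\right)=0,
\end{equation}
a real quadratic of which $\tau$ is a root (the degenerate case $\gamma=0$ is deferred, exactly as in Theorem~\ref{thm:3.1}).

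Since $\Im\tau\neq 0$, Lemma~\ref{lem:3.1} forces the companion root to be $\bar\tau$, so the quadratic equals $C(x-\tau)(x-\bar\tau)=C\bigl(x^2-2(\Re\tau)x+|\tau|^2\bigr)$ with $C$ its leading coefficient. Matching the three coefficients expresses $\delta$ and $\beta$ linearly in $\alpha,\gamma$ and the moduli. Imposing unimodularity $\alpha\delta-\beta\gamma=1$ and substituting these expressions collapses everything to a single quadratic form in $(\alpha,\gamma)$, which after clearing denominators is one of the two equations in \eqref{eq:3.35}. The remaining entries $\beta,\delta$, governed by linear expressions of the type in \eqref{eq:3.36}, must also be integers, and this is precisely the supplementary integrality condition in the statement. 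Running the identical argument on the $\rho$-relation $\frac{N_1 W_2}{pN_2 W_1}(q\rho+t)=\frac{\alpha'\rho +\beta'}{\gamma'\rho +\delta'}$ yields the other quadratic together with the primed auxiliary integers.

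The step I expect to be the main obstacle is re-expressing the coefficients, which emerge naturally in terms of one modulus (say $\tau$, carrying stray factors such as $t/p$), in terms of the moduli that actually appear in \eqref{eq:3.35}--\eqref{eq:3.36}. This forces a second use of $p\tau=q\rho+t$, via the real identities $p\,\Re\tau=q\,\Re\rho+t$ and $|p\tau-t|^2=q^2|\rho|^2$, to convert a $\tau$-expression into a $\rho$-expression and conversely. Because mirror symmetry exchanges the two moduli, this bookkeeping has the effect of tying the unprimed matrix to the $\rho$-quadratic and the primed matrix to the $\tau$-quadratic, and one must track this crossing carefully when reading off the correspondence of the matrix entries with the solutions $(x_0,y_0,x'_0,y'_0)$. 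For completeness I would also note that the relation $p\tau=q\rho+t$ always exists: a real dependence is obtained by taking $q=1$, $p=\Im\rho/\Im\tau$, $t=p\,\Re\tau-\Re\rho$, and in the rational (complex multiplication) case it can be chosen with integers $p,q,t$.

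Finally, the degenerate branches $\gamma=0$ and $\gamma'=0$ are handled as in Theorem~\ref{thm:3.1}: the relation is no longer quadratic, $\Im\tau\neq 0$ forces $\beta$ (resp. $\beta'$) to vanish and $\alpha=\pm1$, and after normalization this reduces to requiring that the associated equation in \eqref{eq:3.35} admit the solution $(x,y)=(1,0)$ (resp. $(x',y')=(1,0)$). The converse direction then follows by reversing each implication, reconstructing an element of $\SL(2,\mathbb{Z})\times\SL(2,\mathbb{Z})$ from any given integer solution.
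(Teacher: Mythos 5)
Your proposal follows essentially the same route as the paper's proof: clear denominators in each relation of \eqref{eq:3.34}, invoke Lemma~\ref{lem:3.1} to identify the companion root as the complex conjugate, solve for $\beta,\delta$ in terms of $\alpha,\gamma$, impose unimodularity to obtain the quadratic form, and use $p\,\Re\tau - t = q\,\Re\rho$ and $p\,\Im\tau = q\,\Im\rho$ to rewrite the $\tau$-expressions in terms of $\rho$, with the $\gamma=0$ branch and the converse handled exactly as in Theorem~\ref{thm:3.1}. Your explicit remark that the unprimed matrix ends up governed by the $\rho$-quadratic (and vice versa) is a correct and welcome piece of bookkeeping that the paper itself states somewhat loosely.
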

\begin{proof}
    Similarly for Theorem \ref{thm:3.1}, we begin with considering
    $\tau$ and deform eq.\eqref{eq:3.34} to obtain
    \begin{equation}
        \begin{split}
            &\frac{W_1 W_2}{qN_1 N_2}(p\tau -t)=\frac{\alpha\tau +\beta}{\gamma\tau +\delta} \\
            \Leftrightarrow &
            \frac{W_1 W_2}{qN_1 N_2}(p\tau -t)(\gamma\tau +\delta)
            -\alpha\tau -\beta =0\\
            \Leftrightarrow &
            \frac{pW_1 W_2}{qN_1 N_2}\gamma \tau^2 +\left(
            \frac{W_1 W_2}{qN_1 N_2}(p\delta-t\gamma)-\alpha
            \right)\tau-\left(\frac{W_1 W_2}{qN_1 N_2}t\delta+\beta\right)=0
        \end{split}
    \end{equation}
using Lemma \ref{lem:3.1} again, two equations
\begin{equation}\label{eq:3.38}
    -2\frac{pW_1 W_2}{qN_1 N_2}(\Re\tau)\gamma=\frac{W_1 W_2}{qN_1 N_2}(p\delta-t\gamma)-\alpha,\quad
    -\frac{pW_1 W_2}{qN_1 N_2}|\tau|^2\gamma=\frac{W_1 W_2}{qN_1 N_2}t\delta+\beta
\end{equation}
holds. Then, from the former part of eq.\eqref{eq:3.38}, 
\begin{equation}
    \delta=
    \frac{qN_1 N_2}{pW_1 W_2}\alpha -\left(2(\Re\tau)-\frac{t}{p}\right)\gamma.
\end{equation}
Therefore we have
\begin{equation}
    \begin{split}
    \beta &=-\frac{t}{p}\alpha-\frac{W_1 W_2}{pqN_1 N_2}
    (p^2|\tau|^2-2(\Re\tau)pt+t^2)\gamma\\
    &= -\frac{t}{p}\alpha-\frac{W_1 W_2}{pqN_1 N_2}((p\Re\tau-t)^2+(p\Im\tau)^2)\gamma\\
    &= -\frac{t}{p}\alpha-\frac{qW_1 W_2}{pN_1 N_2}|\rho|^2\gamma
    \end{split}
\end{equation}
where we used $p\Re\tau-t=q\Re\rho,\ p\Im\tau=q\Im\tau$ due to eq.\eqref{eq:3.32}.
Combining this result with $\alpha\delta-\beta\gamma=1$ yields
\begin{equation}
    \begin{split}
        &\alpha\left( \frac{qN_1 N_2}{pW_1 W_2}\alpha -\left(2(\Re\tau)-\frac{t}{p}\right)\gamma
        \right)
        +\left(\frac{t}{p}\alpha+\frac{qW_1 W_2}{pN_1 N_2}|\rho|^2\gamma\right)\gamma=1\\
        \Leftrightarrow &
        \frac{qN_1 N_2}{pW_1 W_2}\alpha^2 -\frac{2}{p}(\Re p\tau-t)\alpha\gamma+\frac{qW_1 W_2}{pN_1 N_2}|\rho|^2\gamma^2=1\\
        \Leftrightarrow &
        (N_1 N_2)^2\alpha^2-2N_1 N_2 W_1 W_2(\Re\rho)\alpha\gamma
        +(W_1 W_2)^2|\rho|^2\gamma^2=\frac{p}{q}N_1 N_2 W_1 W_2,
    \end{split}
\end{equation}
which is equivalent to the first line of eq.\eqref{eq:3.35}.
The remaining discussion is completely analog of the proof of Theorem \ref{thm:3.1}.
Hence, we see the statement follows.
\end{proof}

\subsubsection{Duality defects for $(m,i)=(0,1)$}\label{sec:3.2.3}
Next we go on to determining two quadratic equations for the case $(m,i)=(0,1)$ where there is
a nontrivial spacetime inversion $(\tau,\rho)\mapsto (-\bar{\tau},-\bar{\rho})$.
In this case, we can also use the same logic as \ref{sec:3.2.2}, in which we relate
the parameters $(\tau,\rho)$ to each other by scalar multiplication and shift.
If we write $\tau=\tau_1 + i\tau_2,\ \rho=\rho_1+\rho_2$ with $\tau_i,\rho_i \in\mathbb{R}\ (i=1,2)$, 
$-\bar{\tau}$ and $-\bar{\rho}$ are represented as
\begin{equation}
    -\bar{\tau}=\tau-2\tau_1,\quad -\bar{\rho}=\rho-2\rho_1.
\end{equation}
Therefore two parameters with $\mathbb{Z}$ acted can be related to the original ones,
and setting the coefficient in eq.\eqref{eq:3.34} as
$p=q=1, t=-2\tau_1$ for the argument of $\tau$, $p=q=1, t=-2\rho$ for $\rho_1$,
one can proceed with the discussion of this case.

\begin{thm}\label{thm:3.3}
    The existence of $\alpha, \beta, \gamma, \delta, \alpha', \beta',\gamma,'\delta'$ that satisfy the equation \eqref{eq:3.34} above is equivalent to the following statement:
    \begin{screen}
        Two quadratic equations in $x, y, x', y'$:
\begin{equation}\label{eq:3.43}
    \begin{split}
    (N_2 W_1)^2 x^2 -2N_1 N_2 W_1 W_2 (\Re \tau) xy + (N_1 W_2)^2 |\tau|^2 y^2 
    = N_1 N_2 W_1 W_2 \\
    (N_1 N_2)^2 x'^2 -2N_1 N_2 W_1 W_2 (\Re \rho) x'y' + (W_1 W_2)^2 |\rho|^2y'^2
    = N_1 N_2 W_1 W_2
\end{split}
\end{equation}

have integer solutions $(x_0, y_0, x'_0, y'_0)$. 
In addition, given these solutions, the four numbers
\begin{equation}\label{eq:3.44}
\begin{split}
     z_0=-\frac{2\Re \rho}{q}x_0-\frac{pN_1 W_2}{qN_2 W_1}|\tau|^2 y_0,\quad
     w_0=-\left(2(\Re \rho)-\frac{2\Re\rho}{q} \right)y_0+ \frac{N_2 W_1}{N_1 W_2}x_0,\\
     z'_0=\frac{2\Re\tau}{p}x_0'-\frac{qW_1 W_2}{pN_1 N_2}|\rho|^2 y'_0,\quad
     w'_0=-\left(2(\Re \tau)+\frac{2\Re\tau}{p} \right)y'_0 +\frac{N_1 N_2}{W_1 W_2}x'_0
\end{split}
\end{equation}
are all integers.
    \end{screen}
\end{thm}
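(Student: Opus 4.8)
The plan is to reduce the $(m,i)=(0,1)$ case to the algebraic template already set up in Theorem \ref{thm:3.2}, exploiting that spacetime inversion $I$ acts by $\tau\mapsto-\bar\tau$, $\rho\mapsto-\bar\rho$ \emph{without} exchanging the two moduli (in contrast to the mirror case of Section \ref{sec:3.2.2}). Concretely, the self-duality condition for the $\tau$-sector reads
\begin{equation}
\frac{N_1W_2}{N_2W_1}\,\tau \;=\; -\overline{\left(\frac{\alpha\tau+\beta}{\gamma\tau+\delta}\right)},
\end{equation}
and analogously for $\rho$. First I would take complex conjugates of both sides and use the elementary identity $-\bar\tau=\tau-2\Re\tau$ to rewrite this as
\begin{equation}
\frac{N_1W_2}{N_2W_1}\bigl(\tau-2\Re\tau\bigr)\;=\;\frac{\alpha\tau+\beta}{\gamma\tau+\delta},
\end{equation}
which is of the same shape as the $\tau$-equation in \eqref{eq:3.34} under the specialization $p=q=1$ with the shift $t=-2\Re\tau$ dictated by $-\bar\tau=\tau-2\Re\tau$ (and $t=-2\Re\rho$ in the $\rho$-sector, exactly as indicated before the theorem). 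This is the key step: it converts the antiholomorphic constraint coming from $I$ into a real-linear shift of the \emph{same} parameter, so that no genuinely new machinery beyond Theorems \ref{thm:3.1}--\ref{thm:3.2} is required.

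Once in this form I would clear denominators to obtain a quadratic in $\tau$ with \emph{real} coefficients, then invoke Lemma \ref{lem:3.1}: since $\Im\tau\neq 0$, the two roots must be $\tau$ and $\bar\tau$, and matching the three coefficients expresses $\alpha$ and $\beta$ through $\gamma,\delta$ together with the moduli data $\Re\tau$ and $|\tau|^2$. Imposing the determinant condition $\alpha\delta-\beta\gamma=1$ and eliminating the dependent variables then produces the first quadratic of \eqref{eq:3.43}; repeating the identical computation for $\rho$ gives the second. The auxiliary integrality requirements \eqref{eq:3.44} arise exactly as in the previous proofs: demanding that $x_0,y_0$ (identified with a pair among $\alpha,\gamma$) be integers does not by itself force the remaining entries $\beta,\delta$ to be integral, so one records the explicit expressions for those entries and imposes integrality on them as well.

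To finish I would treat the degenerate branch $\gamma=0$ separately, where the quadratic collapses to a linear relation; as in Theorem \ref{thm:3.1} this should correspond to the distinguished solution $(x,y)=(1,0)$ and impose a simple divisibility among $N_1,N_2,W_1,W_2$. The converse direction follows by reading every implication backwards, precisely as stated at the end of the proofs of Theorems \ref{thm:3.1} and \ref{thm:3.2}. I expect the main obstacle to be bookkeeping rather than conceptual: one must thread the complex conjugation carefully so that the resulting quadratic genuinely has real coefficients before Lemma \ref{lem:3.1} can be applied, and keep track of the signs introduced by $-\bar\tau=\tau-2\Re\tau$ so that both the cross term and the shift $t$ land with the correct sign in \eqref{eq:3.43} and \eqref{eq:3.44}.
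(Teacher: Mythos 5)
Your proposal matches the paper's own treatment of this case: the paper likewise handles $(m,i)=(0,1)$ by writing $-\bar\tau=\tau-2\Re\tau$, $-\bar\rho=\rho-2\Re\rho$, specializing the template of Theorem \ref{thm:3.2} to $p=q=1$ with $t=-2\Re\tau$ (resp.\ $t=-2\Re\rho$), and then rerunning the Lemma \ref{lem:3.1} / determinant-condition argument of Theorems \ref{thm:3.1}--\ref{thm:3.2}, noting that since $I$ does not exchange the moduli the quadratic reduces to \eqref{eq:3.19}. Your reduction, including moving the antiholomorphic involution to the left-hand side and the separate handling of $\gamma=0$, is exactly this argument, so no substantive difference.
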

A notable point is that the quadratic equation \eqref{eq:3.43} that must be satisfied is identical to that in Theorem \ref{thm:3.1}. As shown in Theorem \ref{thm:3.2}, the equations we need to solve can generally be reduced to the same type of equation, which greatly simplifies the calculations and enhances clarity. However, it is crucial to note that additional conditions vary in each case. Specifically, the resulting matrix $\begin{pmatrix}\alpha & \beta \\ \gamma & \delta \end{pmatrix}$ must satisfy conditions to be an element of $\mathrm{SL}(2, \mathbb{Z})$, and these conditions differ depending on each situation.

\subsubsection{Duality defects for $(m,i)=(1,1)$}
Finally, let us consider the fourth case, namely $(m, i) = (1, 1)$. As before, we can obtain an equation for classifying the duality defect. In this case, since the operation involves a combination of mirror symmetry and spacetime inversion, we need to combine the discussions in Sections \ref{sec:3.2.2} and \ref{sec:3.2.3}. Due to this combination, $(\tau, \rho)$ will map to $(-\bar{\rho}, -\bar{\tau})$, so we aim to express these parameters in terms of the original parameters $(\tau, \rho)$. We start by choosing $(p, q, t) \in \mathbb{R}$ such that
\begin{equation}
    p\tau=q\rho+t.
\end{equation}
Consequently, we find that
\begin{equation}
    -\bar{\tau}=\frac{q\rho-t-2q\Re\rho}{p},\quad -\bar{\rho}=\frac{p\tau+t-2p\Re\tau}{q}.
\end{equation}
Therefore, equation (3.18) modifies to
\begin{equation}\label{eq:3.47}
    \frac{N_1 W_2}{N_2 W_1}\frac{p\tau+t-2p\Re\tau}{q}=\frac{\alpha\tau +\beta}{\gamma\tau+\delta},\quad
\frac{W_1 W_2}{N_1 N_2}\frac{q\rho-t-2q\Re\rho}{p}=\frac{\alpha’\rho +\beta'}{\gamma’\rho +\delta'}.
\end{equation}
This shifts $t$ to $t - 2p\Re \tau$ for $\tau$ and to $t + 2q\Re \rho$ for $\rho$, mirroring the discussion in Section \ref{sec:3.2.2} for both $\tau$ and $\rho$.

The rest of the argument proceeds in the same way as in the previous sections, allowing us to obtain the final theorem of this sequence of theorems.
\begin{thm}\label{thm:3.4}
    The existence of $\alpha, \beta, \gamma, \delta, \alpha', \beta',\gamma,'\delta'$ that satisfy the equation \eqref{eq:3.34} above is equivalent to the following statement:
    \begin{screen}
        Two quadratic equations in $x, y, x', y'$:
\begin{equation}\label{eq:3.48}
    \begin{split}
    p(N_2 W_1)^2 x^2 -2pN_1 N_2 W_1 W_2 (\Re \tau) xy + p(N_1 W_2)^2 |\tau|^2 y^2 
    = qN_1 N_2 W_1 W_2 \\
    q(N_1 N_2)^2 x'^2 -2qN_1 N_2 W_1 W_2 (\Re \rho) x'y' + q(W_1 W_2)^2 |\rho|^2y'^2
    = pN_1 N_2 W_1 W_2
\end{split}
\end{equation}

have integer solutions $(x_0, y_0, x'_0, y'_0)$. 
In addition, given these solutions, the four numbers
\begin{equation}\label{eq:3.49}
\begin{split}
     z_0=\frac{t}{q}x_0-\frac{pN_1 W_2}{qN_2 W_1}|\tau|^2 y_0,\quad
     w_0=-\left(2(\Re \rho)+\frac{t + 2q\Re \rho}{q} \right)y_0+ \frac{N_2 W_1}{N_1 W_2}x_0,\\
     z'_0=-\frac{t}{p}x_0'-\frac{qW_1 W_2}{pN_1 N_2}|\rho|^2 y'_0,\quad
     w'_0=-\left(2(\Re \tau)-\frac{t - 2p\Re \tau}{p} \right)y'_0 +\frac{N_1 N_2}{W_1 W_2}x'_0
\end{split}
\end{equation}
are all integers.
    \end{screen}
\end{thm}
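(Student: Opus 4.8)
The plan is to follow the template established in the proofs of Theorems \ref{thm:3.1} and \ref{thm:3.2}, treating the two equations in \eqref{eq:3.47} one at a time and then obtaining the converse by reversing each step. I would begin with the $\tau$-equation in \eqref{eq:3.47}, clear the denominator $\gamma\tau+\delta$, and collect powers of $\tau$ to obtain a quadratic $a\tau^2+b\tau+c=0$ whose coefficients $a,b,c$ are real and built from $\alpha,\beta,\gamma,\delta$, the orbifold orders, and the real constants $p,q,t$. Since $\Im\tau\neq 0$, Lemma \ref{lem:3.1} forces $\bar\tau$ to be the conjugate root, so comparing the sum and product of roots with $2\Re\tau$ and $|\tau|^2$ yields two linear relations expressing $\delta$ and $\beta$ in terms of $\alpha,\gamma$ and the moduli. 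The one structural novelty relative to Section \ref{sec:3.2.3} is the inversion-induced shift $t\mapsto t-2p\Re\tau$ inside the left-hand side; I expect the $2p\Re\tau$ piece to cancel against the $2\Re\tau$ coming from the sum of roots, so that the relation for $\delta$ is governed by the \emph{original} $t$ while the shifted part is absorbed into the linear and constant data.

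Next I would impose the determinant constraint $\alpha\delta-\beta\gamma=1$ after substituting these relations. Here the mirror symmetry produces the same channel swap seen in Theorem \ref{thm:3.2}: the $\tau$-channel equation, once the linear dependence $p\tau=q\rho+t$ is used, reorganizes into a quadratic carrying $\rho$-data rather than $\tau$-data. The decisive algebraic step is the identity $p^2|\tau|^2-2pt\,\Re\tau+t^2=(p\Re\tau-t)^2+(p\Im\tau)^2=q^2|\rho|^2$, which follows from $p\Re\tau-t=q\Re\rho$ and $p\Im\tau=q\Im\rho$; this is exactly what collapses the determinant relation onto the canonical quadratic form. Because the extra $-2p\Re\tau$ shift is a real translation, and the leading quadratic form is insensitive both to such translations and to the mirror swap, I would argue that the resulting equation is literally \eqref{eq:3.35}, i.e.\ \eqref{eq:3.48}, which explains why the quadratic equations coincide with those of Theorem \ref{thm:3.2}. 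The leftover integrality requirements on $\beta$ and $\delta$ then become the four conditions \eqref{eq:3.49}, which differ from \eqref{eq:3.36} precisely by the inversion shifts $-2p\Re\tau$ and $+2q\Re\rho$.

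I would then repeat the identical manipulation for the $\rho$-equation of \eqref{eq:3.47}, where the shift is instead $t\mapsto t+2q\Re\rho$, and treat the degenerate cases $\gamma=0$ and $\gamma'=0$ separately exactly as in Theorem \ref{thm:3.1}: there the M\"obius relation degenerates to a scaling, forcing the orbifold orders to balance and reproducing the solution $(x,y)=(1,0)$ (resp.\ $(x',y')=(1,0)$). The converse direction follows by running every implication backwards, since each coefficient comparison and the determinant constraint are equivalences. The step I expect to be the main obstacle is the simultaneous bookkeeping of the two independent deformations---the mirror-induced swap of the $\tau$- and $\rho$-channels together with the inversion-induced real shifts---while verifying, with correct signs, that the leading quadratic remains unchanged and that only the constant terms $z_0,w_0,z'_0,w'_0$ pick up the shift contributions; getting these signs and the placement of $2p\Re\tau$ versus $2q\Re\rho$ right is the delicate part of the calculation.
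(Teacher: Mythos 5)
Your proposal follows exactly the route the paper intends: the paper gives no separate proof of Theorem \ref{thm:3.4} beyond the remark that the argument ``proceeds in the same way as in the previous sections,'' and your roadmap is precisely that argument --- clear the denominator, invoke Lemma \ref{lem:3.1} to compare sum and product of roots, solve for $\delta$ and $\beta$, impose $\alpha\delta-\beta\gamma=1$, and use the identity $(p\Re\tau-t)^2+(p\Im\tau)^2=q^2|\rho|^2$ adapted to the inversion-shifted $t$, with the $\gamma=0$ case and the converse handled as in Theorem \ref{thm:3.1}. The one caveat you already flag as delicate is real: the shift $t\mapsto 2p\Re\tau-t$ sends $p\Re\tau-t=q\Re\rho$ to $-q\Re\rho$, so the cross term of the resulting quadratic flips sign relative to \eqref{eq:3.35} (harmless for the existence statement after $y\mapsto -y$, but it does propagate into the integrality formulas \eqref{eq:3.49}, so the quadratic is not quite ``literally'' unchanged).
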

With the above considerations, we have now established a method to determine all duality defects generated by the orbifold of a diagonal discrete subgroup $\mathbb{Z}_{N_1} \times \mathbb{Z}_{N_2} \times \mathbb{Z}_{W_1} \times \mathbb{Z}_{W_2}$ of $U(1)^4$. Specifically, by substituting a given $(\tau, \rho)$ into the set of four quadratic equations, we can find values of $(N_1, N_2, W_1, W_2)$ that admit solutions under the imposed conditions. In the following, we will verify this procedure with simple examples.

\subsection{Some brief examples and restrictions for existence of solutions}\label{sec:3.3}

In Section \ref{sec:3.2}, we derived a set of four quadratic equations, but two points make solving the problem more challenging. First, these are quadratic equations in two variables, and it is generally impossible to explicitly write down their integer solutions. Furthermore, we impose an additional condition that other real numbers, expressed in terms of these solutions, must also be integers (see Theorems \ref{thm:3.1} to \ref{thm:3.4}). Finding a general solution that satisfies all these conditions is not straightforward, making the classification of duality defects a difficult problem. However, as shown in the next section, it is relatively easier to classify defects belonging to a specific class. Before discussing that, we will illustrate how to generate defects using a simple example.

\paragraph{$(\tau,\rho)=(i,i)$}
Let us consider the simplest case, $(\tau, \rho) = (i, i)$, which corresponds to a theory composed of two compactified $c=1$ CFTs with radius 1. Additionally, the theory with $R=1$ has a chiral algebra $\mathrm{SU}(2)$, and by overlapping the left and right sectors, it attains an overall symmetry of $\mathrm{SU}(2) \times \mathrm{SU}(2)$. This theory is thus known as a symmetry-enhanced point with $\mathrm{SU}(2)^2 \times \mathrm{SU}(2)^2$ symmetry. Although the symmetry of this theory is classified in the conventional sense, let us use our method to explore categorical symmetries that extend it, specifically those that are non-invertible.

First, let us look for defects that emerge in the simplest case of $(m, i) = (0, 0)$. Noting that $\Re \tau = \Re \rho = 0$ and $|\tau| = |\rho| = 1$, the first equation in \eqref{eq:3.19} becomes
\begin{equation}
    (N_2 W_1)^2 x^2+(N_1 W_2)^2 y^2=N_1 N_2 W_1 W_2.
\end{equation}
Dividing both sides by $N_1 N_2 W_1 W_2$, and for simplicity denoting positive coprime integers $K_1, K_2$ by $\frac{N_1 W_2}{N_2 W_1} = \frac{K_1}{K_2}$, we obtain
\begin{equation}\label{eq:3.51}
    \frac{K_2}{K_1}x^2+\frac{K_1}{K_2}y^2=1.
\end{equation}
According to equation \eqref{eq:3.20}, the solution $(x_0, y_0)$ of this equation must satisfy
\begin{equation}
    \frac{K_2}{K_1}x_0 \in\mathbb{Z},\quad\frac{K_1}{K_2}y_0 \in\mathbb{Z}.
\end{equation}
Since $K_1$ and $K_2$ are coprime, we must express $x_0 = d_1 K_1$ and $y_0 = d_2 K_2$ using integers $d_1$ and $d_2$. Substituting these back into equation \eqref{eq:3.51}, we find
\begin{equation}
    K_1 K_2(d_1^2+d_2^2)=1
\end{equation}
This implies $K_1 = K_2 = 1$ and $(d_1, d_2) = (0, 1)$ or $(1, 0)$. Hence, $N_1 W_2 = N_2 W_1$, and the solutions are $(x, y) = (1, 0)$ or $(0, 1)$. Note that $(x, y) = (1, 0)$ corresponds to the identity matrix in $\mathrm{SL}(2, \mathbb{Z})$, while $(0, 1)$ corresponds to the S-duality matrix.

The same logic applies to the second equation of \eqref{eq:3.19}, leading to $N_1 N_2 = W_1 W_2$, which also has the identity matrix or S-duality as solutions. Considering $\mathrm{GCD}(N_i, W_i) = 1$, the only possible combination is $(N_1, N_2, W_1, W_2) = (1, 1, 1, 1)$. Thus, the duality defects at this point for $(m, i) = (0, 0)$ consist of four types formed by combining S-duality acting on $\tau$ and S-duality acting on $\rho$ \footnote{For convenience, we also count defects that act as identity on local operators}.

The same approach can be applied when considering mirror symmetry and spacetime inversion. Since $\tau = \rho$, the theory is invariant under $\mathbb{Z}_2^M$ symmetry, and since $-\bar{i} = i$, it is also invariant under spacetime inversion. Choosing $(p, q, t)$ such that $p \tau = q \rho + t$, we find $p = q = 1$ and $t = 0$, leading back to the same equation \eqref{eq:3.19}. Therefore, by considering all cases, we generate a total of 16 duality defects as a combination of four S-dualities with or without mirror and spacetime inversion.


As one becomes familiar with such examples, it becomes apparent that the typical solutions to the equations we derived often reduce to $(x, y) = (1, 0)$ or $(0, 1)$. Determining the conditions for $N_1, N_2, W_1, W_2$ under which these solutions occur is relatively straightforward. This is because, when either $x$ or $y$ is zero, only one term remains on the left-hand side of our quadratic equation. From this, it becomes easy to calculate the relationships that $N_1, N_2, W_1, W_2$ must satisfy. Furthermore, how T-duality can be constructed using the S-matrix and T-matrix is also straightforward to compute. This is because, when $x$ or $y$ is zero, the corresponding $\SL(2, \mathbb{Z})$ element is represented by a matrix of the form $\begin{pmatrix} 1 & \ast \\ 0 & 1 \end{pmatrix}$ or $\begin{pmatrix} 0 & -1 \\ 1 & \ast \end{pmatrix}$, making it either a pure power of the T-matrix or a single application of the S-matrix multiplied by a power of the T-matrix. This proves very useful for calculating fusion rules, though this paper does not pursue such computations.

Given the above, it becomes a mathematically intriguing question whether for any choice of $N_1, N_2, W_1, W_2$, there always exists a $(\tau, \rho)$ such that one of the solutions $(x, y)$ necessarily has $x = 0$ or $y = 0$. Addressing this question is also crucial for the physical problem of symmetry classification. Particularly, for the case $(m, i) = (0, 0)$, we provide an answer as follows\footnote{While this mentions only $\tau$, this also holds for $\rho$.} (noting that the essence of the problem does not change for other cases):
\begin{prop}\label{prop:3.2}
    Let $\tau$ be a complex number satisfying $\Re \tau \geq 0$, $(\Im \tau)^2 > \Re \tau + \frac{1}{4}$ or $\Re \tau < 0$, $(\Im \tau)^2 > -\Re \tau + \frac{1}{4}$. Then, for any positive integers $N_1, N_2, W_1, W_2$, if the quadratic equation
    \begin{equation}\label{eq:3.59}
        \frac{N_1 W_2}{N_2 W_1}x^2 -2(\Re\tau)xy+\frac{N_2 W_1}{N_1 W_2}|\tau|^2y^2=1
    \end{equation}
    has an integer solution for $(x,y)$, then it must satisfy $x = 0$ or $y = 0$.
\end{prop}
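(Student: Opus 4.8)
The plan is to argue by contradiction: assume there is an integer solution $(x,y)$ of \eqref{eq:3.59} with both $x\neq 0$ and $y\neq 0$, and show that this forces $\tau$ to violate the hypothesis. Throughout I abbreviate $L:=\frac{N_1 W_2}{N_2 W_1}>0$, so that the equation reads
\[
Lx^2 - 2(\Re\tau)xy + \frac{|\tau|^2}{L}y^2 = 1.
\]
The crucial observation is that, although $L$ is an essentially free positive parameter depending on the orbifold orders, the two ``diagonal'' terms $Lx^2$ and $\frac{|\tau|^2}{L}y^2$ have product $|\tau|^2 x^2 y^2$, in which $L$ cancels. This is what will make the final bound uniform in $N_1,N_2,W_1,W_2$.

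First I would apply the AM--GM inequality to these two nonnegative terms:
\[
Lx^2 + \frac{|\tau|^2}{L}y^2 \;\geq\; 2\sqrt{L\cdot\tfrac{|\tau|^2}{L}}\,|x||y| = 2|\tau|\,|x||y|.
\]
Substituting into the equation and estimating the cross term by $-2(\Re\tau)xy \geq -2|\Re\tau|\,|x||y|$ gives
\[
1 \;\geq\; 2\big(|\tau|-|\Re\tau|\big)\,|x||y|.
\]
Since $x,y$ are nonzero integers, $|x||y|\geq 1$, and since $\Im\tau\neq 0$ we have $|\tau|>|\Re\tau|$, so the right-hand side is strictly positive and at least $2(|\tau|-|\Re\tau|)$.

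The final step is to unwind this inequality. From $1 \geq 2(|\tau|-|\Re\tau|)$ I obtain $|\tau|\leq |\Re\tau|+\tfrac12$; squaring (both sides being nonnegative) yields $(\Re\tau)^2+(\Im\tau)^2 \leq (\Re\tau)^2 + |\Re\tau| + \tfrac14$, i.e. $(\Im\tau)^2 \leq |\Re\tau| + \tfrac14$. But the two branches of the hypothesis combine to exactly $(\Im\tau)^2 > |\Re\tau| + \tfrac14$ (with $|\Re\tau|=\Re\tau$ when $\Re\tau\geq 0$ and $|\Re\tau|=-\Re\tau$ when $\Re\tau<0$), a contradiction. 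Hence no solution with $x\neq 0$ and $y\neq 0$ can exist, which is the claim.

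I expect the only genuine subtlety---and the place where a naive approach stalls---to be the uniformity in the orbifold data. A direct attempt via the triangle inequality applied to the factored form $|N_1 W_2\,x - N_2 W_1\,\tau y|^2 = N_1 N_2 W_1 W_2$ produces bounds on $|x|,|y|$ that still depend on $L$ and do not close. The point of using AM--GM is precisely that the $L$-dependence cancels between the two square terms, so that a single inequality on $\tau$ rules out off-axis solutions for every choice of $N_1,N_2,W_1,W_2$ simultaneously. It is also worth verifying that the bound is sharp: the threshold $(\Im\tau)^2=|\Re\tau|+\tfrac14$ is exactly the equality case of AM--GM together with $|x|=|y|=1$, which explains the precise constant $\tfrac14$ appearing in the statement.
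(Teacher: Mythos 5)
Your proof is correct and takes essentially the same approach as the paper's: AM--GM applied to the two diagonal terms (where the factor $\frac{N_1 W_2}{N_2 W_1}$ cancels) gives $1 \geq 2\left(|\tau|-|\Re\tau|\right)|x||y|$, which is incompatible with the hypothesis once $|x||y|\geq 1$. The only difference is organisational: by estimating the cross term with absolute values you treat in one stroke the sign cases that the paper splits into $\Re\tau=0$, $\Re\tau>0$ and $\Re\tau<0$ (with a further sub-argument on the sign of $x_0y_0$), which makes your version somewhat cleaner.
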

The proof is deferred to the appendix \ref{appendix:b}. We make a few remarks about this proposition. First, while this proposition addresses the equation in terms of $\tau$, an identical statement holds for equations involving $\rho$. That is, for any choice of orbifold parameters, the sufficient conditions on $\rho$ for the existence of a solution where $x = 0$ or $y = 0$ are the same as those for $\tau$. Moreover, although this proposition is stated for the quadratic equation corresponding to $(m, i) = (0, 0)$, similar propositions can be proved for the quadratic equations in other cases. In such cases, the range of $\tau$ and $\rho$ satisfying the conditions will depend on the choice of $p$ and $q$, but this dependency can be easily calculated by examining the proof.

The proposition’s assumption—that $\Re\tau \geq 0$, $(\Im\tau)^2 > \Re\tau + \frac{1}{4}$ or $\Re\tau < 0$, $(\Im\tau)^2 > -\Re\tau + \frac{1}{4}$—covers a significant portion of the fundamental domain of $\tau$. The fundamental domain is the space of $\tau$ values modulo the action of $\SL(2, \mathbb{Z})$. When this domain is visualized (see Figure \ref{fig:3.1}), it is evident that these conditions encompass the entirety of the fundamental domain except for the isolated point $\tau=\omega$, where $\omega = e^{\frac{2\pi i}{3}}$. This shows that the proposition imposes strong constraints on the existence of duality defects for nearly all theories.

\begin{figure}
    \centering
    \includegraphics[width=1.0\linewidth]{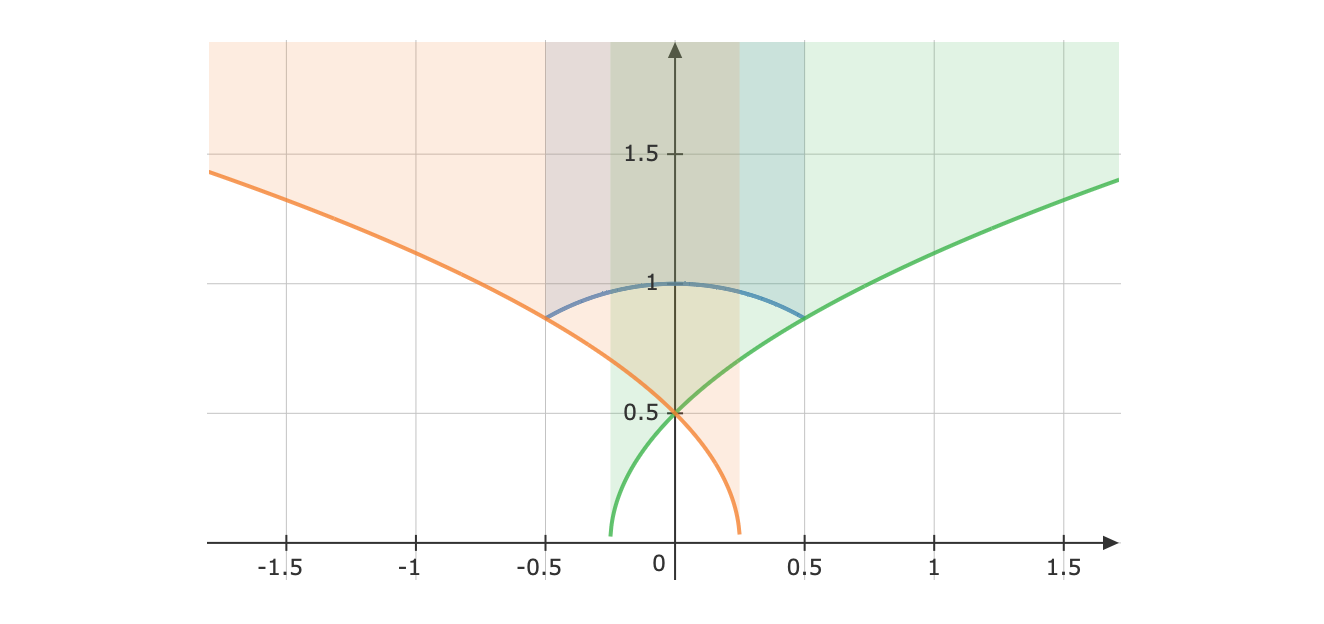}
    \caption{The intersection of the fundamental domein of $\tau$ (blue area) and the assumption of Proposition \ref{prop:3.2} (the intersection of green and orange area). The $x$ axis represent the real part of $\tau$, while $y$ axis does the imaginary part. We can see it covers whole area of the fundamental domain except for $\tau=e^{\frac{2\pi i}{3}}$.}
    \label{fig:3.1}
\end{figure}

As a result, the $(m, i) = (0, 0)$ type duality defects considered in this paper can be completely classified for most examples, as will be demonstrated in the following sections. Furthermore, even in the presence of non-trivial mirror symmetries, if $p = q = 1$, the resulting equation (as shown in equation \eqref{eq:3.59}) remains identical to the one studied here. Thus, in such cases, a complete classification is also achievable.

When $p$ and $q$ take non-trivial values, the coverage extends to most of the fundamental domain. Consequently, strong constraints on the symmetries of many theories can still be established. This illustrates the general applicability of the approach developed in this paper to classify and constrain the existence of duality defects in various settings. We will see such a case in the following.

\paragraph{$(\tau,\rho)=(e^{\frac{2\pi i}{3}},-\frac{1}{2}+\frac{3\sqrt{3}}{2}i)$}
Let us consider the example $(\tau, \rho) = (\omega, \alpha)$, where $\omega = e^{\frac{2\pi i}{3}}$ and $\alpha = -\frac{1}{2} + \frac{3\sqrt{3}}{2}i$, as the final case for this section. This example represents a bicritical point studied in \cite{damia_exploring_2024,dulat_crystallographic_2000}. For instance, the parameter $\omega$ at this point is one of the exceptional cases in Proposition \ref{prop:3.2}, indicating that the quadratic equation \eqref{eq:3.18} admits integer solutions beyond $(x, y) = (1, 0)$ and $(0, 1)$.

What additional solutions exist at such exceptional points? It turns out that only one additional solution appears. The essence of the proof in the appendix \ref{appendix:b} relies on using the arithmetic-geometric mean inequality to show that for an integer solution $(x_0, y_0)$, $x_0 y_0$ is strictly less than 1. The modification in logic for this exceptional point arises because it lies on the boundary of Fig. \ref{fig:3.1}, where all inequalities include equality. This allows $x_0 y_0$ to equal 1, which introduces the solution $x_0 = y_0 = 1$. Consequently, the complete set of solutions is given by $(1, 0)$, $(0, 1)$, and $(1, 1)$.
From this, it is straightforward to see that for the duality symmetry corresponding to $(m, i) = (0, 0)$, the solutions are:
\begin{equation}
    (x,y)=\begin{cases}(1,0)\\ (0,1)\\(1,1) \end{cases},\quad (x’,y’)=\begin{cases} (1,0)\\(0,1)\end{cases}.
\end{equation}
These solutions allow the explicit construction of the duality defect.

The difficult part arises when constructing a defect with mirror symmetry. In this case, $p \tau = q \rho + t$ must hold, where $p = 3$, $q = 1$, and $t = -1$. This means $p = q = 1$ is not satisfied. However, Theorem \ref{thm:3.2} can still be applied here. For an integer solution $(x_0, y_0)$, the upper bound for $x_0 y_0$ changes to a value other than 1. By modifying the proof in the appendix \ref{appendix:b} to use either $\frac{q}{p}$ or $\frac{p}{q}$, depending on the equation considered, the same logic remains valid. Specifically, from the first equation in \eqref{eq:3.35}, we obtain $x_0 y_0 \leq \frac{q}{p}$, and from the second equation, $x'_0 y'_0 \leq \frac{p}{q}$.
Thus, the solutions are:
\begin{equation}
    (x,y)=\begin{cases}(1,0)\\ (0,1) \end{cases},\quad (x’,y’)=\begin{cases} (1,0)\\(0,1)\\(1,1)\\(2,1)\\(1,2)\end{cases}
\end{equation}
By substituting these solutions, we can determine $N_1, N_2, W_1, W_2$.

\section{Symmetries on multicritical points (or lines)}\label{sec:4}
In this section, we expand the results from Section \ref{sec:3.3} by considering solutions not just at a specific point but within a broader ``family" of parameters. However, rather than arbitrary families, we focus on multicritical points and multicritical lines, which are special subsets of the conformal manifold.

The motivation for examining these examples lies in studying the categorical symmetries present on different branches of the moduli space. For instance, let us revisit the bicritical point in the $c=1$ case, also known as the Kosterlitz-Thouless point. At this point, two branches of the moduli space intersect: the circle branch—the family of conventional Narain CFTs—and the orbifold branch—the family of Narain CFTs orbifolded by the charge conjugation $\mathbb{Z}_2^C$. In other words, two possible directions of deformation exist at this point (noting that each branch is one-dimensional), and each direction corresponds to an exactly marginal operator. The deformation along the circle branch is driven by the operator $\partial \phi \bar{\partial}\bar{\phi}$, while deformation along the orbifold branch is associated with $\exp(i\phi) + \exp(-i\phi) \sim \cos(\phi)$ \cite{ginsparg_curiosities_1988}. 
Recall that the KT point exhibits a symmetry described by the Tambara-Yamagami category $\mathrm{TY}(\mathbb{Z}_4)$. This symmetry persists along the orbifold branch because the $\mathrm{TY}(\mathbb{Z}_4)$ symmetry remains intact under deformation in that direction. This conclusion can be reached as follows: if the marginal operator responsible for the deformation is invariant under the action of the symmetry, then that symmetry will be preserved under the deformation. Specifically, in the deformed theory, the marginal operator appears as an insertion in spacetime for computing correlation functions. If this operator commutes with the defect associated with the symmetry, the defect retains its topological nature, preserving the symmetry. Recalling the action \eqref{eq:2.12} of the defects corresponding to $\mathrm{TY}(\mathbb{Z}_4)$, it follows that $\exp(i\phi) + \exp(-i\phi) \sim \cos(\phi)$ is invariant under this action.
Thus, we find that the orbifold branch in the $c=1$ case retains the $\mathrm{TY}(\mathbb{Z}_4)$ symmetry.

What about the $c=2$ case? While we have not yet examined exactly marginal deformations in this context, as a preparatory step, we investigate the duality symmetries present at several multicritical points. This exploration sets the stage for understanding the symmetries preserved under such deformations in $c=2$.
The moduli space for $c=2$ theories has not yet been fully elucidated, but the distribution of multicritical points has been systematically classified in \cite{dulat_crystallographic_2000}. This classification leverages the space group symmetries of crystals to categorize the symmetries present on the toroidal branch (an extension of the $c=1$ circle branch to $c=2$) and computes their orbifolds.

In particular, if a theory obtained by orbifolding a point on the toroidal branch coincides with another theory derived from orbifolding a different point on the same branch, the resulting theory lies at the intersection of two distinct orbifold branches. Such intersections also occur within the toroidal branch itself. Specifically, these are points that can be equivalently described as orbifolds of multiple points on the toroidal branch. A prominent example of this is $(\tau, \rho) = (\omega,\alpha)$, discussed in section \ref{sec:3.3}.

Notably, these intersection points are not always isolated. In some cases, they form a continuous family of points, existing along a line in the moduli space.

\paragraph{$(\tau,\rho)=(it,\frac{1}{2}+it)$ with $t\in \mathbb{R}_{>0}$}
This point is the intersection of two conformal manifolds. One belongs to the toroidal branch, and the other is the branch obtained through a $\mathbb{Z}_2$ orbifold, arising from the lattice shift symmetry of the compactified boson and the sign-reversal symmetry of the boson. Specifically, this theory can be obtained by performing a $\mathbb{Z}_2$ orbifold on the point $(\tau, \rho) = (it, it)$\cite{dulat_crystallographic_2000}.

For instance, substituting these parameters into the quadratic equation \eqref{eq:3.18} yields:
\begin{equation}\label{eq:4.1}
    (N_2 W_1)^2x^2+ t^2(N_1 W_2)y^2=N_1 N_2 W_1 W_2
\end{equation}
Let us first consider the first equation. According to Proposition \ref{prop:3.2}, if this equation has integer solutions, then either $x$ or $y$ must be 0. First, consider the case where $(x, y) = (1, 0)$. Substituting this solution into the equation gives:
\begin{equation}\label{eq:4.2}
    N_2 W_1=N_1 W_2
\end{equation}
In this case, the corresponding element of $\SL(2, \mathbb{Z})$ is the identity matrix. Next, for the case where $(x, y) = (0, 1)$:
\begin{equation}\label{eq:4.3}
    t^2 N_1 W_2 = N_2 W_1
\end{equation}
This implies that $t^2$ must be a rational number, meaning this point corresponds to a rational CFT. The $\SL(2, \mathbb{Z})$ matrix in this case is the S-matrix.

Next, let us focus on the second equation:
\begin{equation}\label{eq:4.4}
    (N_1 N_2)^2x^2-(N_1 N_2 W_1 W_2) xy+ \frac{1+4t^2}{4}(W_1 W_2)y^2=N_1 N_2 W_1 W_2
\end{equation}
If $t \neq \frac{\sqrt{3}}{2}$, the assumptions of Proposition \ref{prop:3.2} are satisfied. Thus, if a solution exists, either $x$ or $y$ must be 0. For $(x, y) = (1, 0)$, the values of $N_1, N_2, W_1, W_2$ satisfy:
\begin{equation}\label{eq:4.5}
    N_1 N_2 = W_1 W_2
\end{equation}
For $(x, y) = (0, 1)$, they satisfy:
\begin{equation}\label{eq:4.6}
    \frac{1+4t^2}{4}W_1 W_2 = N_1 N_2
\end{equation}
The next task is to choose either equation \eqref{eq:4.2} or \eqref{eq:4.3} and either equation \eqref{eq:4.5} or \eqref{eq:4.6}, solve the resulting pair of simultaneous equations under the condition that $N_i, W_i (i=1,2)$ are coprime, and find integer solutions for $N_1, N_2, W_1, W_2$.
For example, let us take the solution $(x,y)=(0,1)$ of equation \eqref{eq:4.1} which corresponds to S-duality on $\tau$ and $(x,y)=(1,0)$ of equation \eqref{eq:4.4} which is the identity on $\rho$.
This means we consider the conditions for $N_1, N_2, W_1, W_2$ such that 
\begin{equation}
    t^2 N_1 W_2 = N_2 W_1\ \mathrm{and}\ N_1 N_2 = W_1 W_2,
\end{equation}
leading us to 
\begin{equation}
    \frac{W_1}{N_1}=\frac{N_2}{W_2}=t.
\end{equation}
Therefore, this condition requires that $t$ to be a rational number, and if this is satisfied, the integer numbers $N_1, N_2, W_1, W_2$ are uniquely determined. Then we get a duality defect composing this orbifold and S-transformation on $\tau$. To summerie this case, we can obtain four duality defect corresponding to each choice of two integer solutions of two equations \eqref{eq:4.1} and \eqref{eq:4.4}. In addition, we can completely classify all duality defects utilizing theorems and propositions in the previous section since $p,q,t$ that satisfy $p\tau=q\rho+t$ for this case is $p=q=1,\ t=-\frac{1}{2}$, which mean we can also use Proposition \ref{prop:3.2}.

\section{Discussion}
Thus, in Theorems \ref{thm:3.1} through \ref{thm:3.4}, we have identified conditions equivalent to self-duality under the orbifold by diagonal discrete subgroups of $U(1)^4$, enabling a more systematic exploration of duality defects. Moreover, these conditions reduce to solving two quadratic equations for integer solutions, from which the specific information about the duality defect—namely, which T-duality defect generates it—can be easily computed. Our additional result, Proposition \ref{prop:3.2}, imposes constraints on the solutions to these quadratic equations, thereby ensuring a complete enumeration of solutions. Consequently, it has become theoretically possible to classify the duality defects (under the orbifolds we have considered) present in a given theory. This represents a decisive step toward the full classification of duality symmetries, including non-invertible ones, in a given theory.

The future work stemming from this study includes the following.
First, the results of this work should be generalized to orbifolds by non-diagonal discrete subgroups. Generators of such discrete subgroups act on vertex primary operators $V_{n,w}$ through integers $k, a_1, a_2, b_1, b_2$ such that:
\begin{equation}
    V_{n,w}\mapsto \exp(\frac{2\pi i}{k}(a_1 n_1+a_2 n_2+b_1 w_1 +b_2 w_2))V_{n,w}.
\end{equation}
This corresponds to the matrices $\sigma$ and $\tilde{\sigma}$ in Equation \eqref{eq:new3.19} being non-diagonal. For such an action, how do $(\tau, \rho)$ transform?

This question could potentially be resolved by examining the action of $\sigma$ and $\tilde{\sigma}$ on $\theta$ and $\phi$. When these matrices are diagonal, as analyzed in Section \ref{sec:3.1}, the transformations correspond to scalar multiples of specific rows in $\Lambda$, enabling the computation of transformations on $\tau$ and $\rho$ via changes in $G$. Determining the transformations of $\Lambda$, i.e., the linear transformations of the lattice's generating vectors, would clarify the transformations of $(\tau, \rho)$. This could then rewrite the self-duality condition for the general case.
If these transformations are linear in $\tau$ and $\rho$, the same logic used to derive Theorems \ref{thm:3.1} through \ref{thm:3.4} would apply, allowing for the classification of duality defects. Resolving this would enable a complete classification of duality defects generated by orbifolds of discrete subgroups of $U(1)^4$.

Beyond shift symmetries, it would be intriguing to consider orbifolds involving symmetries such as the charge conjugation $\mathbb{Z}_2$ symmetry. While we currently lack concrete leads in this area, classifying duality defects arising from such orbifolds would greatly expand the scope of symmetry classifications.

Another potential direction is investigating the duality symmetries preserved during deformations from the toroidal branch to the orbifold branch. As mentioned in Section \ref{sec:4}, this involves explicitly computing the duality defects that preserve the deformation operator. For this computation, it is necessary to determine how the charges $(n, w)$ are acted upon. Using the quadratic equation method we proposed, once solutions are identified, the action can be calculated automatically. The remaining task is to identify the operators that induce exactly marginal deformations toward each orbifold branch.
This, however, is not straightforward in general cases. While an operator being marginal means its conformal dimensions are $(h, \bar{h}) = (1, 1)$, being exactly marginal requires proving that the deformation of the action $\mathcal{S}$ by the operator $\mathcal{O}$:
\begin{equation}
    \delta\mathcal{S}=\lambda\int\dd z\dd\bar{z}\mathcal{O}(z,\bar{z})
\end{equation}
preserves conformal symmetry to all orders in $\lambda$. Demonstrating this is nontrivial and requires analytic calculations. Thus, the key task in this approach is determining when an operator is exactly marginal.
If this is achieved, it would enable the classification of symmetries for a wide range of CFTs, including both the $c=2$ orbifold branch and the toroidal branch.

\appendix

\section{Proof of Proposition \ref{prop:3.2}}\label{appendix:b}
First let us consider the case $\Re\tau = 0$ where we can reduce
eq. \eqref{eq:3.59} to the form
\begin{equation}\label{eq:b.1}
    \frac{N_1 W_2}{N_2 W_1}x^2 +\frac{N_2 W_1}{N_1 W_2}|\tau|^2y^2=1.
\end{equation}
Suppose that this equation has a solution $(x_0,y_0)$ with $x_0 y_0\neq 0$.
Then we can also assume that a solution of this equation satisfies
$x_0>0$ and $y_0>0$ without loss of generality, because for any solution
$(x_0,y_0)$ of eq.\eqref{eq:b.1} all of $(\pm x_0,\pm y_0)$ satisfies it.
Then using the inequality for positive real numbers $a,b$ that states
\begin{equation}
    a+b\geq 2\sqrt{ab}
\end{equation}
where the equality will be attained if and only if $a=b$,
the term $xy$ can be bounded from above as:
\begin{equation}
    \begin{split}
        1&=\frac{N_1 W_2}{N_2 W_1}x_0^2 +\frac{N_2 W_1}{N_1 W_2}|\tau|^2y_0^2\\
        &\geq 2\sqrt{\frac{N_1 W_2}{N_2 W_1}x_0^2 \frac{N_2 W_1}{N_1 W_2}|\tau|^2y_0^2}\\
        &= 2|\tau|x_0y_0\\
        \Leftrightarrow & \frac{1}{2|\tau|} 
        \geq x_0 y_0.
    \end{split}
\end{equation}
Therefore, if $|\tau|>\frac{1}{2}$, one finds
\begin{equation}
    x_0y_0\leq \frac{1}{2|\tau|}<1,
\end{equation}
which implies that $x_0 y_0$ should be zero as $x_0 y_0$ is a nonnegative integer, and this condition matches with the assumption of the proposition $(\Im\tau)^2>\frac{1}{4}$. Then this contradicts $x_0 y_0\neq 0$.

Then suppose that $\tau >0$ and $x_0y_0\neq 0$. Then for this case we can also 
assume that $x_0y_0>0$ because otherwise we have $x_0y_0<0$ and it holds that $-2(\Re\tau)x_0y_0\geq 1$ and then
\begin{equation}\label{eq:b.5}
    1=\frac{N_1 W_2}{N_2 W_1}x_0^2 -2(\Re\tau)x_0y_0+\frac{N_2 W_1}{N_1 W_2}|\tau|^2y_0^2
    \geq 1+1+1>1.
\end{equation}
Without loss of generality we assume $x_0>0$ and $y_0>0$, hence
one finds
\begin{equation}\label{eq:b.6}
    \begin{split}
        1&=\frac{N_1 W_2}{N_2 W_1}x_0^2 -2(\Re\tau)x_0y_0+\frac{N_2 W_1}{N_1 W_2}|\tau|^2y_0^2\\
        &\geq 2\sqrt{\frac{N_1 W_2}{N_2 W_1}x_0^2 \frac{N_2 W_1}{N_1 W_2}|\tau|^2y_0^2}-2(\Re\tau)x_0y_0\\
        &=2(|\tau|-\Re\tau)x_0y_0\\
        \Leftrightarrow & x_0y_0 \leq \frac{1}{2(|\tau|-\Re\tau)}.
    \end{split}
\end{equation}
From the assumption of this proposition, 
\begin{equation}\label{eq:b.7}
    \begin{split}
        (\Im\tau)^2&>\Re\tau +\frac{1}{4}\\
        \Leftrightarrow(\Im\tau)^2 +(\Re\tau^2)&>(\Re\tau)^2 +\Re\tau +\frac{1}{4}\\
        \Leftrightarrow|\tau|^2 &>\left(\Re\tau +\frac{1}{2} \right)^2 \\
       \Leftrightarrow |\tau| -\Re\tau &>\frac{1}{2}
    \end{split}
\end{equation}
where we obtained the last inequality by positivity of both $|\tau|$ and $\Re\tau +\frac{1}{2}$.
Then the two inequalities \eqref{eq:b.6} and \eqref{eq:b.7}
leads to 
\begin{equation}
    x_0 y_0\leq\frac{1}{2(|\tau|-\Re\tau)}<1,
\end{equation}
which contradicts from the assumption. 

Then we close the proof by considering the case $\Re\tau<0$.
However, this case is equivalent to the csae $\Re\tau>0$.
This is because we can assume $x_0 y_0<0$ with the same discussion with eq.\eqref{eq:b.5}, 
and supposing that $x_0<0$ without loss of generality, 
\begin{equation}
\begin{split}
    &\frac{N_1 W_2}{N_2 W_1}x_0^2 -2(\Re\tau)x_0y_0+\frac{N_2 W_1}{N_1 W_2}|\tau|^2y_0^2\\
    =& \frac{N_1 W_2}{N_2 W_1}(-x_0)^2 -2(\Re -\tau)(-x_0)y_0+\frac{N_2 W_1}{N_1 W_2}|-\tau|^2y_0^2.
\end{split}
\end{equation}
Then the problem reduces to the case for $\Re\tau>0$, which
completes the proof.

\acknowledgments
We thank Toshiya Kawai, Yoshiki Fukusumi and Shunta Takahashi for giving comments on this paper. We also thank Masahito Yamazaki, Yuya Kusuki, and Soichiro Shimamori for discussions.
Y.F. is supported by JSPS KAKENHI grant No.23KJ1183.




\bibliographystyle{JHEP}
\bibliography{Furuta2411}






\end{document}